\newtheorem{thm}{Theorem}[section]
\newtheorem{proposition}[thm]{Proposition}
\newtheorem{corollary}[thm]{Corollary}
\newtheorem{remark}[thm]{Remark}
\newtheorem{theorem}[thm]{Theorem}
\title{\textbf{Birational solutions to the set-theoretical 4-simplex equation}} 
\author{S. Konstantinou-Rizos\thanks{skonstantin84@gmail.com}}
\affil{Centre of Integrable Systems, P.G. Demidov Yaroslavl State University, Yaroslavl, Russia}
\theoremstyle{definition}
\newtheorem{definition}[thm]{Definition}
\DeclareMathOperator{\End}{End}
\DeclareMathOperator{\id}{id}
\begin{document}

\maketitle

\begin{abstract}
The 4-simplex equation is a higher-dimensional analogue of Zamolodchikov's tetrahedron equation and the Yang--Baxter equation which are two of the most fundamental equations of mathematical physics. In this paper, we introduce a method for constructing 4-simplex maps, namely solutions to the set-theoretical 4-simplex equation, using Lax matrix refactorisation problems. Employing this method, we construct 4-simplex maps which at a certain limit give tetrahedron maps classified by Kashaev, Korepanov and Sergeev. Moreover, we construct a Kadomtsev--Petviashvili type of 4-simplex map. Finally, we introduce a method for constructing 4-simplex maps which can be restricted on level sets to parametric 4-simplex maps using Darboux transformations of integrable PDEs. We construct a nonlinear Schr\"odinger type parametric 4-simplex map which is the first parametric 4-simplex map in the literature. 
\end{abstract}

\bigskip

\hspace{.2cm} \textbf{PACS numbers:} 02.30.Ik, 02.90.+p, 03.65.Fd.

\hspace{.2cm} \textbf{Mathematics Subject Classification 2020:} 35Q55, 16T25.

\hspace{.2cm} \textbf{Keywords:} 4-simplex equation, parametric 4-simplex maps, Zamolodchikov's tetrahedron

\hspace{2.4cm}  equation, parametric tetrahedron maps, Darboux transformations, NLS type

\hspace{2.4cm}  equations, Yang--Baxter maps.

\section{Introduction}
The $n$-simplex equation \cite{Bazhanov} is a generalisation of the tetrahedron equation ($n=3$) --- firstly introduced by Zamolodchikov \cite{Zamolodchikov, Zamolodchikov-2} --- and the well-celebrated Yang--Baxter equation ($n=2$), which are two of the most fundamental equations of mathematical physics. The study of the set-theoretical $n$-simplex equation was formally initiated by Drinfeld \cite{Drinfeld} for $n=2$. We use the term $n$-simplex map when referring to the solutions of such equations as it was suggested by Buchstaber \cite{Buchstaber}, as well as Veselov \cite{Veselov}. It is known that $n$-simplex maps play an important role in the theory of integrable systems of mathematical physics. In fact, they are associated with integrable lattice equations via symmetries (see, e.g., \cite{Pap-Tongas, Kassotakis-Tetrahedron} for 2- and 3-simplex maps) and integrable nonlinear PDEs, as well as they are related to Darboux and B\"acklund transformations \cite{Sokor-Sasha, Sokor-Papamikos, Papamikos},  inverse scattering problems  and  other integrable objects. Therefore, there is a need to develop methods for constructing interesting $n$-simplex maps which may give rise to important integrable models.

Moreover, $n$-simplex maps are related to polygon equations \cite{Dimakis-Hoissen, Kashaev-Sergeev-2}, and also can be generated by matrix refactorisation problems. In particular, a quite interesting fact is that the local $(n-1)$-simplex equation is a generator of solutions to the $n$-simplex equation \cite{Maillet, Maillet-Nijhoff, Dimakis-Hoissen}. Following \cite{Veselov-Suris}, if an $n$-simplex map is generated by the local $(n-1)$-simplex equation, then the latter is called its Lax representation \cite{Dimakis-Hoissen}. However, in the literature there are only examples of 2- and 3-simplex maps generated by the local 1- and 2-simplex equations, respectively. 

This paper is concerned with the study of the 4-simplex equation, the so-called Bazhanov--Stroganov equation, and the development of new methods for constructing set-theoretical solutions to it. In principle, the 4-simplex equation has more solutions than the tetrahedron and the Yang--Baxter equation, since all the Yang--Baxter and tetrahedron maps can be extended trivially to 4-simplex maps. However, only a small number of 4-simplex solutions are known in the literature up-to-date; some simple solutions can be found in \cite{Hietarinta, Bardakov}.

In this paper, we show how to obtain new solutions to the set-theoretical 4-simplex equation by generalising the matrix generators of tetrahedron maps. We apply this method to the list of generators of the Kashaev--Sergeev--Korepanov tetrahedron maps \cite{Kashaev-Sergeev} and a KP type tetrahedron map, and we construct novel 4-simplex maps. Moreover, we extend the ideas of \cite{Sokor-Sasha, Sokor-2020}; namely, we employ Darboux transformations and show how to systematically construct 4-simplex maps which can be restricted to parametric 4-simplex maps on invariant leaves. As an illustrative example, we present a new rational parametric 4-simplex map of Nonlinear Schr\"odinger (NLS) type.

%Specifically, we list the main results of this paper:
%\begin{itemize}
 %   \item A method for constructing solutions to the 4-simplex equation by generalising the matrix generators of tetrahedron maps;
 %   \item Construction of novel 4-simplex maps 
%    of Sergeev--Kashaev--Korepanov type;
%    \item Derivation of a noninvolutive, birational Kadomtsev--Petviashvili (KP) type 4-simplex map;
%    \item A method for constructing parametric 4-simplex maps using Darboux transformations of integrable PDEs;
%    \item Derivation of a novel, birational parametric 4-simplex map of NLS type.
%\end{itemize}

\subsection{Organisation of the paper}

This paper is organised as follows:

In the next section, we introduce the notation we use throughout the text and we give the definitions of Zamolodchikov's tetrahedron equation and the 4-simplex equation. Moreover, we explain the relation between the former equations and particular matrix refactorisation problems. Specifically, we clarify the relation between the tetrahedron equation and the local Yang--Baxter equation, as well as the relation between the 4-simplex equation and the local tetrahedron equation.

In section \ref{Extension method}, we present a method for extending tetrahedron maps to solutions of the 4-simplex equation by generalising their matrix generators. The obtained 4-simplex maps can be restricted to tetrahedron maps at a particular limit. We employ this method to construct new 4-simplex maps of Kashaev--Korepanov--Sergeev type. Moreover, we construct a new 4-simplex map of KP type.

Section \ref{Darboux_scheme} deals with the extension to the case of the 4-simplex equation of the ideas presented in \cite{Sokor-Sasha, Sokor-2020} about constructing Yang--Baxter and tetrahedron maps using Darboux transformations. Specifically, we show how refactorisation problems of Darboux matrices can be used to derive 4-simplex maps which can be restricted to parametric 4-simplex maps on invariant leaves. We demonstrate the method by employing a Darboux transformation for the NLS equation and we construct 4-simplex and parametric 4-simplex maps of NLS type.

Finally, in section \ref{conclusions} we summarise the results and present some ideas for future work.

\section{Preliminaries}\label{prelim}
In this section, we explain the relation between the solutions of the local Yang--Baxter equation and the solutions to the functional tetrahedron equation, as well as the relation between the solutions of the local tetrahedron equation and the solutions to the 4-simplex equation.

\subsection{Notation}
Throughout the text:
\begin{itemize}
    \item By $\mathcal{X}$ we denote an arbitrary set, whereas by Latin italic letters (i.e. $x, y, u, v$ etc.) the elements of $\mathcal{X}$, with an exception of the `spectral parameter' which is denoted by the Greek letter $\lambda$. Moreover, by $\End(\mathcal{X})$ we denote the set of maps $\mathcal{X}\rightarrow\mathcal{X}$. Parameters will be denoted by Greek letters and will be complex numbers (i.e. $\alpha, \beta, \gamma, \delta\in\mathbb{C}$).
  
    \item Matrices will be denoted by capital Latin straight letters (i.e. ${\rm A}, {\rm B}, {\rm C}$) etc. Additionally, by ${\rm A}^n_{i_1 i_2\ldots i_m}$ we denote the $n\times n$ extensions of the $m\times m$ ($m<n$) matrix ${\rm A}$, where the elements of ${\rm A}$ are placed at the intersection of the $i_1, i_2\ldots i_m$ rows with the $i_1, i_2\ldots i_m$ columns of matrix ${\rm A}^n_{i_1 i_2\ldots i_m}$. For example, ${\rm A}^3_{12}=\begin{pmatrix} 
 a_{11} &  a_{12} & 0\\ 
a_{21} &  a_{22} & 0\\
0 & 0 & 1
\end{pmatrix}$, ${\rm A}^3_{13}=\begin{pmatrix} 
 a_{11} & 0 &  a_{12} \\
 0 & 1 & 0\\
a_{21} & 0 & a_{22} 
\end{pmatrix}$ and  ${\rm A}^3_{23}=\begin{pmatrix} 
1 & 0 & 0\\
0 & a_{11} &  a_{12} \\ 
0 & a_{21} &  a_{22} \\
\end{pmatrix}$
are a $3\times 3$ extension of matrix ${\rm A}=\begin{pmatrix} 
 a_{11} &  a_{12} \\ 
a_{21} &  a_{22} \\
\end{pmatrix}$.
    
    \item Matrix differential operators are denoted by capital Gothic letters (for instance, $\mathfrak{L}=D_x+\rm{U}$).
\end{itemize}

\subsection{Zamolodchikov's functional tetrahedron VS local Yang--Baxter equation}
A map $T\in\End(\mathcal{X}^3)$, namely
\begin{equation}\label{Tetrahedron_map}
 T:(x,y,z)\mapsto (u(x,y,z),v(x,y,z),w(x,y,z)),
\end{equation}
is called a \textit{tetrahedron map} if it satisfies the \textit{functional tetrahedron} (or Zamolodchikov's tetrahedron) equation
\begin{equation}\label{Tetrahedron-eq}
    T^{123}\circ T^{145} \circ T^{246}\circ T^{356}=T^{356}\circ T^{246}\circ T^{145}\circ T^{123}.
\end{equation}
Functions $T^{ijk}\in\End(\mathcal{X}^6)$, $i,j=1,\ldots 6,~i< j<k$, in \eqref{Tetrahedron-eq} are maps that act as map $T$ on the $ijk$ terms of the Cartesian product $\mathcal{X}^6$ and trivially on the others. For instance,
$$
T^{246}(x,y,z,r,s,t)=(x,u(y,r,t),z,v(y,r,t),s,w(y,r,t)).
$$

Furthermore, if we assign the complex parameters $\alpha$, $\beta$ and $\gamma$ to the variables $x$, $y$ and $z$, respectively, we define a map $T\in\End[(\mathcal{X}\times\mathbb{C})^3]$, namely $T:((x,a),(y,b),(z,c))\mapsto ((u(x,y,z),\alpha),(v(x,y,z),\beta),(w(x,y,z),\gamma))$ which we denote for simplicity as
\begin{equation}\label{Par-Tetrahedron_map}
 T_{\alpha,\beta,\gamma}:(x,y,z)\mapsto (u_{\alpha,\beta,\gamma}(x,y,z),v_{\alpha,\beta,\gamma}(x,y,z),w_{\alpha,\beta,\gamma}(x,y,z)).
\end{equation}
Map \eqref{Par-Tetrahedron_map} is called a \textit{parametric tetrahedron map} if it satisfies the \textit{parametric functional tetrahedron equation}
\begin{equation}\label{Par-Tetrahedron-eq}
    T^{123}_{\alpha,\beta,\gamma}\circ T^{145}_{\alpha,\delta,\epsilon} \circ T^{246}_{\beta,\delta,\zeta}\circ T^{356}_{\gamma,\epsilon,\zeta}=T^{356}_{\gamma,\epsilon,\zeta}\circ T^{246}_{\beta,\delta,\zeta}\circ T^{145}_{\alpha,\delta,\epsilon}\circ T^{123}_{\alpha,\beta,\gamma}.
\end{equation}

Now, let ${\rm L}={\rm L}(x;\kappa)$ be a matrix depending on a variable $x\in\mathcal{X}$ and a parameter $\kappa\in\mathbb{C}$ of the form
\begin{equation}\label{matrix-L}
   {\rm L}(x,\kappa)= \begin{pmatrix} 
a(x,\kappa) & b(x,\kappa)\\ 
c(x,\kappa) & d(x,\kappa)
\end{pmatrix},
\end{equation}
where its entries $a, b, c$ and $d$ are scalar functions of $x$ and $k$. Let ${\rm L}^3_{ij}$, $i,j=1,2, 3$, $i\neq j$, be the $3\times 3$ extensions of matrix \eqref{matrix-L},  defined by
{\small
\begin{equation}\label{Lij-mat}
   {\rm L}^3_{12}(x;\kappa)=\begin{pmatrix} 
 a(x,\kappa) &  b(x,\kappa) & 0\\ 
c(x,\kappa) &  d(x,\kappa) & 0\\
0 & 0 & 1
\end{pmatrix},\quad
 {\rm L}^3_{13}(x;\kappa)= \begin{pmatrix} 
 a(x,\kappa) & 0 & b(x,\kappa)\\ 
0 & 1 & 0\\
c(x,\kappa) & 0 & d(x,\kappa)
\end{pmatrix}, \quad
 {\rm L}^3_{23}(x;\kappa)=\begin{pmatrix} 
   1 & 0 & 0 \\
0 & a(x,\kappa) & b(x,\kappa)\\ 
0 & c(x,\kappa) & d(x,\kappa)
\end{pmatrix},
\end{equation}
}
where ${\rm L}^3_{ij}={\rm L}^3_{ij}(x,\kappa)$, $i,j=1,2,3$.

The following matrix trifactorisation problem
\begin{equation}\label{Lax-Tetra}
    {\rm L}^3_{12}(u;a){\rm L}^3_{13}(v;b){\rm L}^3_{23}(w;c)= {\rm L}^3_{23}(z;c){\rm L}^3_{13}(y;b){\rm L}^3_{12}(x;a),
\end{equation}
where matrices ${\rm L}^3_{ij}$ are defined as in \eqref{Lij-mat}, is the Maillet--Nijhoff equation \cite{Nijhoff} in Korepanov's form, which appears in the literature as the \textit{local Yang--Baxter} equation.

Now, if a map of the form \eqref{Par-Tetrahedron_map} satisfies the local Yang--Baxter equation \eqref{Lax-Tetra}, then this map is a possible tetrahedron map, and equation \eqref{Lax-Tetra} is called its \textit{Lax representation}. In this paper, we consider the case where $a(x,\kappa), b(x,\kappa), c(x,\kappa)$ and $d(x,\kappa)$ in \eqref{Lij-mat} are scalar functions, however Korepanov studied equation \eqref{Lax-Tetra} the case where $a(x,\kappa), b(x,\kappa), c(x,\kappa)$ and $d(x,\kappa)$ in \eqref{Lij-mat} are matrices \cite{Korepanov}. 

Note that matrix refactorisation problems were systematically used for discrete integrable systems already in 90s by Moser and Veselov in \cite{Moser-Veselov}.

\subsection{Set-theoretical 4-simplex equation VS local tetrahedron equation}\label{4-simplexVSLocalYB}
A map $S\in\End(\mathcal{X}^4)$, namely
\begin{equation}\label{4-simplex-map}
 S:(x,y,z,t)\mapsto (u(x,y,z,t),v(x,y,z,t),w(x,y,z,t),r(x,y,z,t)),
\end{equation}
is called a \textit{4-simplex map} if it satisfies the \textit{set-theoretical 4-simplex}  (or Bazhanov--Stroganov) equation
\begin{equation}\label{4-simplex-eq}
    S^{1234}\circ S^{1567} \circ S^{2589}\circ S^{368,10} \circ S^{479,10}=S^{479,10}\circ S^{368,10}\circ S^{2589}\circ S^{1567} \circ S^{1234}.
\end{equation}
Functions $S^{ijkl}\in\End(\mathcal{X}^{10})$, $i,j,k,l=1,\ldots 10,~i< j<k<l$, in \eqref{4-simplex-eq} are maps that act as map $S$ on the $ijkl$ terms of the Cartesian product $\mathcal{X}^{10}$ and trivially on the others. For instance,
\begin{align*}
S^{1567}&(x_1,x_2,\ldots x_{10})=\\
&(u(x_1,x_5,x_6,x_7),x_2,x_3,x_4,v(x_1,x_5,x_6,x_7),w(x_1,x_5,x_6,x_7),r(x_1,x_5,x_6,x_7),x_8,x_9,x_{10}).
\end{align*}

Furthermore, if we assign the complex parameters $\alpha$, $\beta$, $\gamma$ and $\delta$ to the variables $x$, $y$, $z$ and $t$, respectively, we define a map $S\in\End[(\mathcal{X}\times\mathbb{C})^4]$, namely $S:((x,\alpha),(y,\beta),(z,\gamma),(t,\delta))\mapsto ((u(x,y,z,t),\alpha),(v(x,y,z,t),\beta),(w(x,y,z,t),\gamma),(r(x,y,z,t),\delta))$ which we denote for simplicity as
\begin{equation}\label{Par-4-simplex-map}
 S_{\alpha,\beta,\gamma,\delta}:(x,y,z,t)\mapsto (u_{\alpha,\beta,\gamma,\delta}(x,y,z,t),v_{\alpha,\beta,\gamma,\delta}(x,y,z,t),w_{\alpha,\beta,\gamma,\delta}(x,y,z,t),r_{\alpha,\beta,\gamma,\delta}(x,y,z,t)).
\end{equation}
Map \eqref{Par-Tetrahedron_map} is called a \textit{parametric 4-simplex map} if it satisfies the \textit{parametric 4-simplex equation}
\begin{equation}\label{Par-4-simplex-eq}
     S^{1234}_{\alpha,\beta,\gamma,\delta}\circ S^{1567}_{\alpha,\epsilon,\zeta,\theta} \circ S^{2589}_{\beta,\epsilon,\kappa,\mu}\circ S^{368,10}_{\gamma,\zeta, \kappa, \nu} \circ S^{479,10}_{\delta,\theta,\mu,\nu}=S^{479,10}_{\delta,\theta,\mu,\nu}\circ S^{368,10}_{\gamma,\zeta, \kappa, \nu}\circ S^{2589}_{\beta,\epsilon,\kappa,\mu}\circ S^{1567}_{\alpha,\epsilon,\zeta,\theta} \circ S^{1234}_{\alpha,\beta,\gamma,\delta}.
\end{equation}

Now, let ${\rm L}={\rm L}(x;\kappa)$ be a $3\times 3$ square matrix depending on a variable $x\in\mathcal{X}$ and a parameter $\kappa\in\mathbb{C}$ of the form
\begin{equation}\label{matrix-L-simplex}
   {\rm L}(x;\kappa)= \begin{pmatrix} 
a(x,\kappa) & b(x,\kappa) & c(x,\kappa)\\ 
d(x,\kappa) & e(x,\kappa) & f(x,\kappa)\\
k(x,\kappa) & l(x,\kappa) & m(x,\kappa)
\end{pmatrix},
\end{equation}
where its entries $a, b, c, d, e, f, k, l$ and $m$ are scalar functions of $x$ and $k$. Let ${\rm L}^6_{ijk}(x;\kappa)$, $i,j,k=1,\ldots 6$, $i< j<k$, be the $6\times 6$ extensions of matrix \eqref{matrix-L-simplex},  defined by
{\small
\begin{subequations}\label{6x6-extensions}
\begin{align}
    & {\rm L}^6_{123}(x;\kappa)= \begin{pmatrix} 
a(x,\kappa) & b(x,\kappa) & c(x,\kappa) & 0 & 0 & 0\\ 
d(x,\kappa) & e(x,\kappa) & f(x,\kappa) & 0 & 0 & 0\\
k(x,\kappa) & l(x,\kappa) & m(x,\kappa) & 0 & 0 & 0\\
0 & 0 & 0 & 1 & 0 & 0 \\
0 & 0 & 0 & 0 & 1 & 0 \\
0 & 0 & 0 & 0 & 0 & 1 \\
\end{pmatrix}, \quad
{\rm L}^6_{145}(x;\kappa)= \begin{pmatrix} 
a(x,\kappa) & 0 & 0 & b(x,\kappa) & c(x,\kappa) & 0\\ 
0 & 1 & 0 & 0 & 0 & 0 \\
0 & 0 & 1 & 0 & 0 & 0 \\
d(x,\kappa) & 0 & 0 & e(x,\kappa) & f(x,\kappa) & 0\\
k(x,\kappa) & 0 & 0 & l(x,\kappa) & m(x,\kappa) & 0\\
0 & 0 & 0 & 0 & 0 & 1 \\
\end{pmatrix},\\
& {\rm L}^6_{246}(x;\kappa)= \begin{pmatrix} 
1 & 0 & 0 & 0 & 0 & 0 \\
0 & a(x,\kappa) & 0 & b(x,\kappa) & 0 & c(x,\kappa)\\ 
0 & 0 & 1 & 0 & 0 & 0 \\
0 & d(x,\kappa) & 0 & e(x,\kappa) & 0 & f(x,\kappa)\\
0 & 0 & 0 & 0 & 1 & 0 \\
0 & k(x,\kappa) & 0 & l(x,\kappa) & 0 & m(x,\kappa)
\end{pmatrix},\quad 
 {\rm L}^6_{356}(x;\kappa)= \begin{pmatrix} 
 1 & 0 & 0 & 0 & 0 & 0 \\
 0 & 1 & 0 & 0 & 0 & 0 \\
0 & 0 & a(x,\kappa) & 0 & b(x,\kappa) & c(x,\kappa)\\ 
0 & 0 & 0 & 1 & 0 & 0 \\
0 & 0 & d(x,\kappa) & 0 & e(x,\kappa) & f(x,\kappa)\\
0 & 0 & k(x,\kappa) & 0 & l(x,\kappa) & m(x,\kappa)
\end{pmatrix}.
\end{align}
\end{subequations}
}

We call the following matrix four-factorisation problem
\begin{equation}\label{local-tetrahedron}
    {\rm L}^6_{123}(u;\alpha){\rm L}^6_{145}(v;\beta){\rm L}^6_{246}(w;\gamma){\rm L}^6_{356}(r;\delta)={\rm L}^6_{356}(t;\delta){\rm L}^6_{246}(z;\gamma){\rm L}^6_{145}(y;\beta){\rm L}^6_{123}(x;\alpha)
\end{equation}
\textit{local tetrahedron equation}. The local tetrahedron equation is a generator of potential solutions to the 4-simplex equation. If map \eqref{Par-4-simplex-map} satisfies equation \eqref{local-tetrahedron}, then the matrix refactorisation problem \eqref{local-tetrahedron} is called a \textit{Lax representation} for map \eqref{Par-4-simplex-map}.

\section{A method for constructing 4-simplex maps}\label{Extension method}
In this section, we present a method for constructing 4-simplex maps by generalising the generators of tetrahedron maps. Then, we apply this method to the list of generators of the tetrahedron maps of ferro-electric type classified by Sergeev \cite{Sergeev} and studied by Kashaev, Korepanov and Sergeev in \cite{Kashaev-Sergeev} as well as to the generator of a KP-type tetrahedron map appeared in \cite{Dimakis}. As a result, we construct solutions to the 4-simplex equations as nontrivial extensions of tetrahedron maps. 

\subsection{Formulation of the method}\label{method}
Any tetrahedron map can be extended to a 4-simplex map in the following way.

\begin{proposition}\label{trivial_extension}
Let $T\in\End(\mathcal{X}^3)$, $T:(x,y,z)\rightarrow (u(x,y,z),v(x,y,z),w(x,y,z))$, be a tetrahedron map. Then, the map $S\in\mathcal(X)^4$, $S:(x,y,z,t)\rightarrow (u(x,y,z),v(x,y,z),w(x,y,z),t)$, is a 4-simplex map.
\end{proposition}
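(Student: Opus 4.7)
The plan is to reduce the $4$-simplex equation for $S$ to the functional tetrahedron equation for $T$ by exploiting the fact that $S$ acts trivially on its fourth argument. First, I would observe that since $S:(x,y,z,t)\mapsto(u(x,y,z),v(x,y,z),w(x,y,z),t)$ leaves the last slot alone, for every quadruple $i<j<k<l$ the ten-dimensional map $S^{ijkl}$ coincides with $T^{ijk}$, interpreted as an element of $\End(\mathcal{X}^{10})$ acting as $T$ on the slots $i,j,k$ and as the identity on all other slots (including slot $l$). Applying this observation to the five factors appearing on either side of \eqref{4-simplex-eq} replaces them, respectively, by $T^{123},\ T^{156},\ T^{258},\ T^{368},\ T^{479}$.

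Next, I would examine the index supports of these five maps. The triple $(4,7,9)$ on which $T^{479}$ acts is disjoint from $\{1,2,3,5,6,8\}$, which is the union of the supports of $T^{123},T^{156},T^{258},T^{368}$. Therefore $T^{479}$ commutes with each of the other four, and may be moved freely to either end of a composition involving them.

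Finally, I would apply the relabeling $4\mapsto 5,\ 5\mapsto 6,\ 6\mapsto 8$ to the tetrahedron equation \eqref{Tetrahedron-eq}, which converts it into the identity
\begin{equation*}
T^{123}\circ T^{156}\circ T^{258}\circ T^{368}=T^{368}\circ T^{258}\circ T^{156}\circ T^{123}.
\end{equation*}
Combining this with the commutativity property of the previous step and composing once with $T^{479}$ on the appropriate end yields the equality of the two sides of \eqref{4-simplex-eq}. The statement therefore reduces to a purely combinatorial verification and the tetrahedron equation satisfied by $T$.

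There is no real analytic obstacle; the only thing to be careful with is the bookkeeping, namely checking that the five triples $(1,2,3),(1,5,6),(2,5,8),(3,6,8),(4,7,9)$ arising from the $4$-simplex indexing really do split into a tetrahedron-equation configuration on the six slots $\{1,2,3,5,6,8\}$ plus one isolated triple, so that the argument above applies verbatim on both sides of \eqref{4-simplex-eq}.
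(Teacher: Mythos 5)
Your proof is correct: the identification of $S^{ijkl}$ with $T^{ijk}$ acting on slots $i,j,k$ of $\mathcal{X}^{10}$, the commutation of $T^{479}$ with the other four factors due to disjoint supports, and the order-preserving relabeling $4\mapsto 5$, $5\mapsto 6$, $6\mapsto 8$ turning the tetrahedron equation into $T^{123}\circ T^{156}\circ T^{258}\circ T^{368}=T^{368}\circ T^{258}\circ T^{156}\circ T^{123}$ all check out. The paper offers no details beyond "readily verified by substitution," so your argument is essentially the same verification, made explicit and structural.
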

\begin{proof}
It can be readily verified by substitution of $S:(x,y,z,t)\rightarrow (u(x,y,z),v(x,y,z),w(x,y,z),t)$ into the 4-simplex equation.
\end{proof}

We call all 4-simplex maps which are obtained from tetrahedron maps as in Proposition \ref{trivial_extension} \textit{trivial solutions} to the 4-simplex equation. In order to construct nontrivial solutions to the 4-simplex equation, one may extend the generator of tetrahedron maps. 

In particular, let us consider a tetrahedron map \eqref{Tetrahedron_map} with  Lax representation \eqref{Lax-Tetra}. If we consider the $3\times 3$ extension of matrix \eqref{matrix-L}, namely matrix
\begin{equation}\label{M-matrix}
{\rm M}(x;\kappa)\equiv{\rm L}^3_{12}(x;\kappa)= \begin{pmatrix} 
a(x,\kappa) & b(x,\kappa) & 0\\ 
c(x,\kappa) & d(x,\kappa) & 0 \\
0 & 0 & 1
\end{pmatrix}
\end{equation}
and substitute it to the local tetrahedron equation \eqref{local-tetrahedron}, we will obtain the 4-simplex map
$$S:(x,y,z,t)\rightarrow (u(x,y,z),v(x,y,z),w(x,y,z),t),$$
namely a trivial solution to the 4-simplex equation with Lax representation 
$$
  {\rm M}^6_{123}(u;\alpha){\rm M}^6_{145}(v;\beta){\rm M}^6_{246}(w;\gamma){\rm M}^6_{356}(r;\delta)={\rm M}^6_{356}(t;\delta){\rm M}^6_{246}(z;\gamma){\rm M}^6_{145}(y;\beta){\rm M}^6_{123}(x;\alpha),
$$
where ${\rm M}^6_{ijk}$, $i,j,k=1,\ldots 6$, $i<j<k$, are the $6\times 6$ generalisations of matrix \eqref{M-matrix}. In order to derive a nontrivial 4-simplex map, one must introduce a new variable. That is, instead of considering matrix \eqref{M-matrix} as a $3\times 3$ extension of matrix \eqref{matrix-L}, we introduce the following matrix
\begin{equation}\label{K-matrix}
{\rm K}(x_1,x_2;\kappa)= \begin{pmatrix} 
a(x_1,\kappa) & b(x_1,\kappa) & 0\\ 
c(x_1,\kappa) & d(x_1,\kappa) & 0 \\
0 & 0 & x_2
\end{pmatrix}
\end{equation}
such that for $x_2\rightarrow 1$, we have ${\rm K}(x_1,x_2;\kappa)\rightarrow {\rm M}(x_1;\kappa)$.

Now, substituting ${\rm K}(x_1,x_2;\kappa)$ to the local tetrahedron equation
\begin{align*}
  &{\rm K}^6_{123}(u_1,u_2;\alpha){\rm K}^6_{145}(v_1,v_2;\beta){\rm K}^6_{246}(w_1,w_2;\gamma){\rm K}^6_{356}(r_1,r_2;\delta)=\\
  &{\rm K}^6_{356}(t_1,t_2;\delta){\rm K}^6_{246}(z_1,z_2;\gamma){\rm K}^6_{145}(y_1,y_2;\beta){\rm K}^6_{123}(x_1,x_2;\alpha),
\end{align*}
we aim to obtain a correspondence, which for particular values of the free variables will define 4-simplex maps. As we will see in certain applications, this correspondence does not define 4-simplex maps for arbitrary choice of the free variables. However, for certain choice of the free variables these correspondences define 4-simplex maps.

\begin{remark}\normalfont
    As explained in Section \ref{4-simplexVSLocalYB}, $4$-simplex maps can be generated  by $3\times 3$ matrices \eqref{matrix-L-simplex} by substitution of the latter to the local tetrahedron equation \eqref{local-tetrahedron}. Matrix \eqref{M-matrix} is a  trivial $3\times 3$ extension of matrix \eqref{matrix-L}, and it generates trivial extensions of tetrahedron maps, whereas the simplest nontrivial $3\times 3$ extension of matrix \eqref{matrix-L} can be obtained by  replacing $1$ in \eqref{M-matrix} by a variable $x_2$, namely consider matrix \eqref{K-matrix}. Our motivation for this choice is to demonstrate that even the simplest  $3\times 3$ nontrivial extension  of matrix \eqref{M-matrix} leads to new interesting $4$-simplex maps as nontrivial extensions of  tetrahedron maps. However, this is not  the only $3\times 3$ extension one can consider.
\end{remark}

\subsection{Kashaev--Korepanov--Sergeev type 4-simplex maps}\label{KKS maps}
We apply the method presented in the previous section to the generators of the maps in Sergeev's classification \cite{Sergeev, Kashaev-Sergeev} and we construct new 4-simplex maps. These maps can be considered as extensions of the Kashaev--Korepanov--Sergeev tetrahedron maps.

%%%%%%%%%%%%%%%%%CASE A%%%%%%%%%%%%%%%%%%%%%%%
\subsubsection{Case a.}
Consider the $k$-parametric family of tetrahedron maps
\begin{equation}\label{Sergeev-a}
    T:(x,y,z)\rightarrow \left(x,\frac{y-xz}{\kappa},z\right),
\end{equation}
with Lax representation
$$
 {\rm L}^3_{12}(u;k){\rm L}^3_{13}(v;k){\rm L}^3_{23}(w;k)= {\rm L}^3_{23}(z;k){\rm L}^3_{13}(y;k){\rm L}^3_{12}(x;k),
$$
where ${\rm L}^3_{ij}$, $i,j=1,2,3$, $i<j$, are the $3\times 3$ generalisations of matrix
\begin{equation}\label{Lax-Sergeev-a}
    {\rm L}(x;\kappa)=\begin{pmatrix} 
1 & x\\ 
0 & \kappa
\end{pmatrix},
\end{equation}
and let ${\rm M}(x;\kappa)\equiv {\rm L}^3_{12}(x;\kappa)=\begin{pmatrix} 
1 & x & 0\\ 
0 & \kappa & 0\\
0 & 0 & 1
\end{pmatrix}$
be its $3\times 3$ extension.

Now consider the following generalisation of matrix ${\rm M}(x;\kappa)$:
\begin{equation}\label{Gen-Lax-Sergeev-a}
    {\rm K}(x_1,x_2;\kappa)=\begin{pmatrix} 
1 & x_1 & 0\\ 
0 & \kappa & 0 \\
0 & 0 & x_2
\end{pmatrix},
\end{equation}
such that for $x_2\rightarrow 1$, ${\rm K}(x_1,x_2;\kappa)\rightarrow {\rm M}(x_1;\kappa)$. We consider the $6\times 6$ extensions of matrix ${\rm K}(x_1,x_2;\kappa)$, namely the following
\begin{align*}
    & {\rm K}^6_{123}(x_1,x_2;\kappa)= \begin{pmatrix} 
1 & x_1 & 0 & 0 & 0 & 0\\ 
0 & \kappa & 0 & 0 & 0 & 0\\
0 & 0 & x_2 & 0 & 0 & 0\\
0 & 0 & 0 & 1 & 0 & 0 \\
0 & 0 & 0 & 0 & 1 & 0 \\
0 & 0 & 0 & 0 & 0 & 1 \\
\end{pmatrix}, \quad
{\rm K}^6_{145}(x_1,x_2;\kappa)= \begin{pmatrix} 
1 & 0 & 0 & x_1 & 0 & 0\\ 
0 & 1 & 0 & 0 & 0 & 0 \\
0 & 0 & 1 & 0 & 0 & 0 \\
0 & 0 & 0 & \kappa & 0 & 0\\
0 & 0 & 0 & 0 & x_2 & 0\\
0 & 0 & 0 & 0 & 0 & 1 \\
\end{pmatrix},\\
& {\rm K}^6_{246}(x_1,x_2;\kappa)= \begin{pmatrix} 
1 & 0 & 0 & 0 & 0 & 0 \\
0 & 1 & 0 & x_1 & 0 & 0\\ 
0 & 0 & 1 & 0 & 0 & 0 \\
0 & 0 & 0 & \kappa & 0 & 0\\
0 & 0 & 0 & 0 & 1 & 0 \\
0 & 0 & 0 & 0 & 0 & x_2
\end{pmatrix},\quad 
 {\rm K}^6_{356}(x_1,x_2;\kappa)= \begin{pmatrix} 
 1 & 0 & 0 & 0 & 0 & 0 \\
 0 & 1 & 0 & 0 & 0 & 0 \\
0 & 0 & 1 & 0 & x_1 & 0\\ 
0 & 0 & 0 & 1 & 0 & 0 \\
0 & 0 & 0 & 0 & \kappa & 0\\
0 & 0 & 0 & 0 & 0 & x_2
\end{pmatrix},
\end{align*}
and substitute them to the local tetrahedron equation
\begin{align*}
  &{\rm K}^6_{123}(u_1,u_2,\kappa){\rm K}^6_{145}(v_1,v_2;\kappa){\rm K}^6_{246}(w_1,w_2;\kappa){\rm K}^6_{356}(r_1,r_2;\kappa)=\\
  &{\rm K}^6_{356}(t_1,t_2;\kappa){\rm K}^6_{246}(z_1,z_2;\kappa){\rm K}^6_{145}(y_1,y_2;\kappa){\rm K}^6_{123}(x_1,x_2;\kappa).
\end{align*}
The above implies the following correspondence:
\begin{equation}\label{corr-serg-a}
  u_1=x_1,\quad u_2=x_2,\quad v_1=\frac{y_1-x_1z_1}{k},\quad v_2=y_2,\quad w_1=z_1,\quad r_1=\frac{t_1y_2}{x_2},\quad r_2=\frac{t_2z_2}{w_2}.  
\end{equation}

For the choices of the free variable $w_2=t_2$ and $w_2=y_2$ the above correspondence defines 4-simplex maps. In particular, we have the following.

\begin{proposition}\label{case-a}
The following maps
\begin{equation}\label{4-simplex-a}
   S_1: (x_1,x_2,y_1,y_2,z_1,z_2,t_1,t_2)\rightarrow \left(x_1,x_2,\frac{y_1-x_1z_1}{k},y_2,z_1,t_2,\frac{t_1y_2}{x_2},z_2\right),
\end{equation}
and 
\begin{equation}\label{4-simplex-a-2}
   S_2: (x_1,x_2,y_1,y_2,z_1,z_2,t_1,t_2)\rightarrow \left(x_1,x_2,\frac{y_1-x_1z_1}{k},y_2,z_1,y_2,\frac{t_1y_2}{x_2},\frac{t_2z_2}{y_2}\right),
\end{equation}
are eight-dimensional, noninvolutive 4-simplex maps which share the same invariants $I_1=x_1$, $I_2=x_2$, $I_3=y_2$ and $I_4=z_1$. Moreover, map $S_1$ is birational.
\end{proposition}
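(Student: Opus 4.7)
The proposition makes four separate assertions---the 4-simplex equation for each of $S_1$ and $S_2$, the shared invariants, noninvolutivity of both, and birationality of $S_1$---so the plan is to handle the three combinatorially trivial items by inspection and to concentrate the work on the 4-simplex identity itself.

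The hard step is checking the 4-simplex equation \eqref{4-simplex-eq} for each candidate. As stressed in Section \ref{4-simplexVSLocalYB}, the local tetrahedron equation is only a generator of \emph{potential} solutions, so the fact that the correspondence \eqref{corr-serg-a} is satisfied is a necessary but not a sufficient condition. For each $i$ I would therefore apply the five extensions $S_i^{479,10}$, $S_i^{368,10}$, $S_i^{2589}$, $S_i^{1567}$, $S_i^{1234}$ in order to a generic $10$-tuple in $\mathcal{X}^{10}$ (that is, twenty scalar unknowns, since each slot of $\mathcal{X}$ carries a pair), simplify slot by slot, repeat in the reverse order, and verify that the resulting twenty coordinate functions coincide. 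The affine form of the $v_1$-update and the multiplicative form of the remaining components keep every intermediate expression a monomial or a short rational combination, so the verification is mechanical and easily handled with a symbolic algebra system; this is nevertheless where the real length of the proof lies.

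The invariants are then immediate: by direct inspection of \eqref{4-simplex-a} and \eqref{4-simplex-a-2}, slots $1$, $2$, $4$, $5$ are returned unchanged by both $S_1$ and $S_2$, so the coordinate functions $I_1=x_1$, $I_2=x_2$, $I_3=y_2$, $I_4=z_1$ are preserved. For noninvolutivity it suffices to follow a single slot through two iterations: under either $S_i$ the third slot evolves as
\[
y_1\;\longmapsto\;\frac{y_1-x_1 z_1}{k}\;\longmapsto\;\frac{1}{k^2}\bigl(y_1-(1+k)\,x_1 z_1\bigr),
\]
which is not the identity, so $S_i^2\neq\id$.

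Finally, for birationality of $S_1$ I would solve the defining equations for the preimage in closed form, reading off $x_1=u_1$, $x_2=u_2$, $y_2=v_2$, $z_1=w_1$, $z_2=r_2$, $t_2=w_2$, and then $y_1=kv_1+u_1 w_1$ and $t_1=r_1 u_2/v_2$. Both $S_1$ and this inverse are rational, so $S_1$ is birational. I would also observe in passing that the same approach fails for $S_2$, since its image lies in the proper subvariety $\{w_2=v_2\}$ of $\mathcal{X}^8$, which is consistent with the Proposition asserting birationality only for $S_1$.
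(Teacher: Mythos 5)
Your overall route coincides with the paper's: the 4-simplex property is checked by direct substitution into \eqref{4-simplex-eq} (the paper likewise offers no shortcut beyond ``straightforward substitution''), the invariants are read off from the fixed slots, noninvolutivity is witnessed on a single coordinate, and birationality of $S_1$ is established by exhibiting the explicit inverse, which agrees with the one in the paper, namely $(u_1,u_2,v_1,v_2,w_1,w_2,r_1,r_2)\mapsto\left(u_1,u_2,kv_1+u_1w_1,v_2,w_1,r_2,\frac{r_1u_2}{v_2},w_2\right)$. Your remark that the correspondence \eqref{corr-serg-a} is only a necessary condition, and your observation that $S_2$ maps into the subvariety $\{w_2=v_2\}$, are both sound.

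The one flaw is in the noninvolutivity step. Your witness is the third slot, whose second iterate is $\frac{1}{k^2}\bigl(y_1-(1+k)x_1z_1\bigr)$; this fails to certify $S_i\circ S_i\neq\id$ at the parameter value $k=-1$, where it reduces identically to $y_1$ (indeed $y_1\mapsto x_1z_1-y_1\mapsto y_1$ there), so as written your argument leaves the case $k=-1$ open even though the maps are still noninvolutive for that value. The paper avoids this by tracking the seventh slot instead: $r_1\circ S_1=r_1\circ S_2=\frac{t_1y_2^2}{x_2^2}\neq t_1$, which is independent of $k$ and settles noninvolutivity uniformly. Replacing your witness by this one (or treating $k=-1$ separately) closes the gap; everything else in your proposal is correct.
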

\begin{proof}
Maps \eqref{4-simplex-a} and \eqref{4-simplex-a-2} follow after substitution of $w_2=t_2$ and $w_2=y_2$, respectively, to \eqref{corr-serg-a}. The 4-simplex property can be verified with straightforward substitution to the 4-simplex equation \eqref{4-simplex-eq}. Now, since $r_1\circ S_1 = r_1\circ S_2 = \frac{t_1y_2^2}{x_2^2}\neq t_1$, it follows that $S_i\circ S_i\neq\id$ $i=1,2$, thus maps \eqref{4-simplex-a} and \eqref{4-simplex-a-2} are noninvolutive. The invariants are obvious.

Finally, the inverse of map \eqref{4-simplex-a} is given by:
$$
S_1^{-1}(u_1,u_2,v_1,v_2,w_1,w_2,r_1,r_2)\rightarrow \left(u_1,u_2,kv_1+u_1w_1,v_2,w_1,r_2,\frac{r_1u_2}{v_2},w_2 \right),
$$
therefore map \eqref{4-simplex-a} is birational.
\end{proof}

\begin{remark}\normalfont In proposition \ref{case-a} we derived 4-simplex maps by fixing the free variable $w_2$ in correspondence \eqref{corr-serg-a}. However, correspondence \eqref{corr-serg-a} does not define 4-simplex maps for arbitrary value of the variable $w_2$. For instance, for the choice $w_2=x_2$, correspondence \eqref{corr-serg-a} defines the following map
$$
(x_1,x_2,y_1,y_2,z_1,z_2,t_1,t_2)\rightarrow \left(x_1,x_2,\frac{y_1-x_1z_1}{k},y_2,z_1,x_2,\frac{t_1y_2}{x_2},\frac{t_2z_2}{x_2}\right),
$$
which is not a 4-simplex map.
\end{remark}

%%%%%%%%%%%%%%%%%CASE B%%%%%%%%%%%%%%%%%%%%%%%
\subsubsection{Case b.}
Consider the $k$-parametric family of tetrahedron maps
\begin{equation}\label{Sergeev-b}
    T:(x,y,z)\rightarrow \left(\frac{xy}{y+xz},\frac{xz}{k},\frac{y+xz}{x}\right),
\end{equation}
with Lax representation
$$
 {\rm L}^3_{12}(u;k){\rm L}^3_{13}(v;k){\rm L}^3_{23}(w;k)= {\rm L}^3_{23}(z;k){\rm L}^3_{13}(y;k){\rm L}^3_{12}(x;k),
$$
where ${\rm L}^3_{ij}$, $i,j=1,2,3$, $i<j$, are the $3\times 3$ generalisations of matrix
\begin{equation}\label{Lax-Sergeev-b}
    {\rm L}(x;\kappa)=\begin{pmatrix} 
1 & x\\ 
\frac{\kappa}{x} & 0
\end{pmatrix},
\end{equation}
and let ${\rm M}(x;\kappa)\equiv {\rm L}^3_{12}(x;\kappa)=\begin{pmatrix} 
1 & x & 0\\ 
\frac{\kappa}{x} & 0 & 0\\
0 & 0 & 1
\end{pmatrix}$
be its $3\times 3$ extension.

Now consider the following generalisation of matrix ${\rm M}(x;\kappa)$:
\begin{equation}\label{Gen-Lax-Sergeev-b}
    {\rm K}(x_1,x_2;\kappa)=\begin{pmatrix} 
1 & x_1 & 0\\ 
\frac{\kappa}{x_1} & 0 & 0 \\
0 & 0 & x_2
\end{pmatrix},
\end{equation}
such that for $x_2\rightarrow 1$, ${\rm K}(x_1,x_2;\kappa)\rightarrow {\rm M}(x_1;\kappa)$, and consider the $6\times 6$ extensions of ${\rm K}(x_1,x_2;\kappa)$, ${\rm K}^6_{ijk}(x_1,x_2;\kappa)$, as in \eqref{6x6-extensions}. Then, we substitute ${\rm K}^6_{ijk},$ $i=1,\ldots, 6$, $i<j<k$ to the local tetrahedron equation
\begin{align*}
  &{\rm K}^6_{123}(u_1,u_2,\kappa){\rm K}^6_{145}(v_1,v_2;\kappa){\rm K}^6_{246}(w_1,w_2;\kappa){\rm K}^6_{356}(r_1,r_2;\kappa)=\\
  &{\rm K}^6_{356}(t_1,t_2;\kappa){\rm K}^6_{246}(z_1,z_2;\kappa){\rm K}^6_{145}(y_1,y_2;\kappa){\rm K}^6_{123}(x_1,x_2;\kappa).
\end{align*}
The above implies the following correspondence:
\begin{equation}\label{corr-serg-b}
  u_1=\frac{x_1y_1}{y_1+x_1z_1},\quad u_2=x_2,\quad v_1=\frac{x_1z_1}{k},\quad v_2=y_2,\quad w_1=\frac{y_1+x_1z_1}{x_1},\quad r_1=\frac{t_1y_2}{x_2},\quad r_2=\frac{t_2z_2}{w_2}.  
\end{equation}

The above correspondence defines 4-simplex maps for the choices of the free variable $w_2=t_2$ and $w_2=y_2$. Specifically, we have the following.

\begin{proposition}\label{case-b}
The following maps
\begin{equation}\label{4-simplex-b}
   S_1: (x_1,x_2,y_1,y_2,z_1,z_2,t_1,t_2)\rightarrow \left(\frac{x_1y_1}{y_1+x_1z_1}, x_2, \frac{x_1z_1}{k}, y_2, \frac{y_1+x_1z_1}{x_1}, t_2, \frac{t_1y_2}{x_2}, z_2\right),
\end{equation}
with invariants $I_1=x_2$, $I_2=y_2$, $I_3=t_2z_2$ and $I_4=t_2+z_2$, and
\begin{equation}\label{4-simplex-b-2}
   S_2: (x_1,x_2,y_1,y_2,z_1,z_2,t_1,t_2)\rightarrow \left(\frac{x_1y_1}{y_1+x_1z_1}, x_2, \frac{x_1z_1}{k}, y_2, \frac{y_1+x_1z_1}{x_1}, y_2, \frac{t_1y_2}{x_2}, \frac{t_2z_2}{y_2}\right),
\end{equation}
with invariants $I_1=x_2$, $I_2=y_2$, $I_3=t_2z_2$ and $I_4=x_2y_2z_2t_2$, are eight-dimensional, noninvolutive 4-simplex maps. Moreover, map $S_1$ is birational.
\end{proposition}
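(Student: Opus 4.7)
The plan is to follow the same template as the proof of Proposition \ref{case-a}, since the setting is structurally identical: start from correspondence \eqref{corr-serg-b}, specialise the free variable $w_2$, verify the 4-simplex property and involutivity by direct substitution, check the invariants component-by-component, and finish by exhibiting the inverse of $S_1$.

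First I would observe that setting $w_2=t_2$ in \eqref{corr-serg-b} immediately produces map \eqref{4-simplex-b}, while setting $w_2=y_2$ gives map \eqref{4-simplex-b-2}; in both cases the remaining components of the correspondence become single-valued expressions in the input variables, so the resulting $S_1,S_2$ are well-defined maps $(\mathbb{C}^*)^8\to (\mathbb{C}^*)^8$ (on the open set where the denominators $x_1,\,y_1+x_1z_1,\,x_2,\,y_2$ are nonzero). The eight-dimensionality is then automatic.

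Next, the key step is the verification that $S_1$ and $S_2$ satisfy \eqref{4-simplex-eq}. This is a routine but lengthy computation: one writes both sides of \eqref{4-simplex-eq} as compositions of five maps acting on a $10$-tuple, applies them in the prescribed orders, and checks that the resulting $10$-tuples coincide componentwise. This is the main obstacle, as it produces cumbersome rational expressions (notably through the combinations of the form $y_1+x_1z_1$); in practice it is carried out with a computer algebra system. I expect the check for the $(x_1,y_1,z_1,\ldots)$-components to mirror exactly the verification for the underlying tetrahedron map \eqref{Sergeev-b} at Sergeev level, while the check on the second-index components reduces to a handful of identities of the form $\frac{y_2 t_2}{x_2}\cdot \text{(something)} = \text{(same thing swapped)}$.

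For non-involutivity, I track the seventh component: under both $S_1$ and $S_2$, one has $r_1\circ S_i = t_1 y_2/x_2$ while $x_2$ and $y_2$ remain fixed (since $u_2=x_2$ and $v_2=y_2$); hence $r_1\circ S_i\circ S_i = t_1 y_2^2/x_2^2 \neq t_1$, proving $S_i\circ S_i\neq\id$. The invariants are then checked directly: for $S_1$ one has $u_2=x_2$, $v_2=y_2$, $w_2=t_2$ and $r_2=z_2$, so $w_2 r_2=t_2z_2$ and $w_2+r_2=t_2+z_2$ are preserved; for $S_2$ one has $w_2=y_2$ and $r_2=t_2z_2/y_2$, so $w_2 r_2 = t_2z_2$ and $u_2 v_2 w_2 r_2 = x_2 y_2 t_2 z_2$ are preserved.

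Finally, for birationality of $S_1$, I would invert the relations explicitly: from $u_1=x_1y_1/(y_1+x_1z_1)$, $v_1=x_1z_1/k$ and $w_1=(y_1+x_1z_1)/x_1$ one reads off $y_1=u_1 w_1$, $x_1 z_1 = kv_1$ and hence $x_1=(u_1 w_1+kv_1)/w_1$ and $z_1=kv_1w_1/(u_1 w_1+kv_1)$; the remaining relations give $x_2=u_2$, $y_2=v_2$, $z_2=r_2$, $t_2=w_2$ and $t_1=r_1 u_2/v_2$. All of these are rational in $(u_1,\ldots,r_2)$, so $S_1$ is birational.
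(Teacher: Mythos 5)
Your proposal is correct and follows essentially the same route as the paper: specialise $w_2$ in the correspondence, verify the 4-simplex property by direct substitution, check the stated invariants componentwise, and exhibit the explicit inverse of $S_1$ (your formulas for $x_1$, $z_1$, $t_1$ agree with the paper's $S_1^{-1}$). The only cosmetic difference is your non-involutivity witness, $r_1\circ S_i\circ S_i=t_1y_2^2/x_2^2\neq t_1$, where the paper instead uses $u_1\circ S_i\neq x_1$; both are equally valid.
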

\begin{proof}
Maps \eqref{4-simplex-b} and \eqref{4-simplex-b-2} follow after substitution of $w_2=t_2$ and $w_2=y_2$, respectively, to \eqref{corr-serg-b}. The 4-simplex property can be verified with straightforward substitution to the 4-simplex equation \eqref{4-simplex-eq}. Now, since $u_1\circ S_1 = u_1\circ S_2 = \frac{x_1^2y_1z_1}{(y_1+x_1z_1)(ky_1+x_1z_1)}\neq x_1$, it follows that $S_i\circ S_i\neq\id$ $i=1,2$, thus maps \eqref{4-simplex-b} and \eqref{4-simplex-b-2} are noninvolutive. For the invariants we have that, in view of \eqref{4-simplex-b} and \eqref{4-simplex-b-2}, $u_2=x_2$, $v_2=y_2$, $r_2w_2=z_2t_2$ and $u_2v_2w_2r_2=x_2y_2z_2t_2$.

Finally, the inverse of map \eqref{4-simplex-b} is given by:
$$
S_1^{-1}(u_1,u_2,v_1,v_2,w_1,w_2,r_1,r_2)\rightarrow \left(\frac{kv_1+u_1w_1}{w_1},u_2,u_1w_1,v_2,\frac{kv_1w_1}{kv_1+u_1w_1},r_2,\frac{r_1u_2}{v_2},w_2\right),
$$
therefore map \eqref{4-simplex-b} is birational.
\end{proof}

%%%%%%%%%%%%%%%%%CASE C%%%%%%%%%%%%%%%%%%%%%%%
\subsubsection{Case c.}
Consider the tetrahedron map
\begin{equation}\label{Sergeev-c}
    T:(x,y,z)\rightarrow \left(\frac{xy}{xy-xz+z},xy+z-xz,z\right),
\end{equation}
with Lax representation
$$
 {\rm L}^3_{12}(u){\rm L}^3_{13}(v){\rm L}^3_{23}(w)= {\rm L}^3_{23}(z){\rm L}^3_{13}(y){\rm L}^3_{12}(x),
$$
where ${\rm L}^3_{ij}(x)$, $i,j=1,2,3$, $i<j$, are the $3\times 3$ generalisations of matrix
\begin{equation}\label{Lax-Sergeev-c}
    {\rm L}(x)=\begin{pmatrix} 
x & 0\\ 
1-x & 1
\end{pmatrix},
\end{equation}
and let ${\rm M}(x)\equiv {\rm L}^3_{12}(x)=\begin{pmatrix} 
x & 0 & 0\\ 
1-x & 1 & 0\\
0 & 0 & 1
\end{pmatrix}$
be its $3\times 3$ extension.

Now, consider the following generalisation of matrix ${\rm M}(x)$:
\begin{equation}\label{Gen-Lax-Sergeev-c}
    {\rm K}(x_1,x_2)=\begin{pmatrix} 
x_1 & 0 & 0\\ 
1-x_1 & 1 & 0\\
0 & 0 & x_2
\end{pmatrix},
\end{equation}
such that for $x_2\rightarrow 1$, ${\rm K}(x_1,x_2)\rightarrow {\rm M}(x_1)$, and substitute ${\rm K}(x_1,x_2)$ to the local tetrahedron equation
\begin{align*}
  &{\rm K}^6_{123}(u_1,u_2){\rm K}^6_{145}(v_1,v_2){\rm K}^6_{246}(w_1,w_2){\rm K}^6_{356}(r_1,r_2)=\\
  &{\rm K}^6_{356}(t_1,t_2){\rm K}^6_{246}(z_1,z_2){\rm K}^6_{145}(y_1,y_2){\rm K}^6_{123}(x_1,x_2),
\end{align*}
where ${\rm K}^6_{ijk}$, $i=1,\ldots, 6$, $i<j<k$, are the $6\times 6$ extensions of matrix ${\rm K}(x_1,x_2)$ as in \eqref{6x6-extensions}. The above implies the following correspondence:
\begin{subequations}
\begin{align}\label{corr-serg-c}
  &u_1=\frac{x_1y_1}{x_1y_1+z_1-x_1z_1},\quad u_2=\frac{t_1x_2y_2}{y_2+t_1x_2-x_2},\quad v_1=x_1y_1+z_1-x_1z_1,\\ 
  & v_2=y_2,\quad w_1=z_1,\quad r_1=\frac{y_2+t_1x_2-x_2}{y_2},\quad r_2=\frac{t_2z_2}{w_2}.  
\end{align}
\end{subequations}

The above correspondence defines a 4-simplex map for the choice of the free variable $w_2=t_2$. In particular, we have the following.

\begin{proposition}\label{case-c}
The following map\small
\begin{equation}\label{4-simplex-c}
(x_1,x_2,y_1,y_2,z_1,z_2,t_1,t_2)\stackrel{S}{\rightarrow} \left(\frac{x_1y_1}{x_1y_1+z_1-x_1z_1}, \frac{t_1x_2y_2}{y_2+t_1x_2-x_2}, x_1y_1+z_1-x_1z_1, y_2, z_1, t_2, \frac{y_2+t_1x_2-x_2}{y_2}, z_2\right),
\end{equation}
\normalfont
with invariants $I_1=z_1$, $I_2=x_1y_1$, $I_3=t_2z_2$ and $I_4=t_2+z_2$ is an eight-dimensional, noninvolutive 4-simplex map. 
\end{proposition}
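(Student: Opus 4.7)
The proof will follow the same pattern as Propositions \ref{case-a} and \ref{case-b}. First I would derive the map \eqref{4-simplex-c} by specialising the free variable in correspondence \eqref{corr-serg-c}: setting $w_2=t_2$ eliminates the remaining freedom and yields the explicit formulas for $u_1,u_2,v_1,v_2,w_1,w_2,r_1,r_2$ stated in the proposition. This immediately explains the dimensionality (eight independent outputs) and places the map in the right form for further checks.

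Next, I would verify that $S$ satisfies the 4-simplex equation \eqref{4-simplex-eq} by direct substitution. This is the bulk of the computational work: one must compute both compositions $S^{1234}\circ S^{1567}\circ S^{2589}\circ S^{368,10}\circ S^{479,10}$ and $S^{479,10}\circ S^{368,10}\circ S^{2589}\circ S^{1567}\circ S^{1234}$ on a generic element of $\mathcal{X}^{10}$ and compare component-by-component. Because $S$ was derived from a refactorisation of matrices ${\rm K}^6_{ijk}$ satisfying the local tetrahedron equation \eqref{local-tetrahedron}, this verification is expected to reduce, after simplification, to the associativity of the corresponding matrix product, so no obstruction should arise; nevertheless the rational expressions in the second and seventh components make this the main technical step, and it is most efficient to delegate the simplifications to a symbolic computation system.

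To establish noninvolutivity, it suffices to exhibit one component of $S\circ S$ that does not coincide with the corresponding identity component. A convenient choice is the sixth component: we have $w_2\circ S=t_2$ and then applying $S$ again gives $w_2\circ S\circ S=t_2\circ S=r_2$, but $r_2=\frac{t_2 z_2}{w_2}=\frac{t_2z_2}{t_2}=z_2$ on the image; tracking carefully one finds $S\circ S\neq\id$. Alternatively, the second component $u_2=\frac{t_1x_2y_2}{y_2+t_1x_2-x_2}$ is manifestly not returned to $x_2$ after a second application, which shows $S\circ S\neq \id$.

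Finally, the invariants are immediate from the explicit formulas: $w_1\circ S=z_1$ gives $I_1=z_1$; the product $u_1v_1\circ S=\frac{x_1y_1}{x_1y_1+z_1-x_1z_1}\cdot(x_1y_1+z_1-x_1z_1)=x_1y_1$ gives $I_2=x_1y_1$; and $w_2\circ S=t_2$, $r_2\circ S=z_2$ together give both $I_3=t_2z_2$ and $I_4=t_2+z_2$ as symmetric functions preserved by the swap $(w_2,r_2)\mapsto(t_2,z_2)$. No nontrivial manipulation is required for this last step.
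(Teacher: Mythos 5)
Your overall route coincides with the paper's: obtain \eqref{4-simplex-c} from the correspondence \eqref{corr-serg-c} by fixing $w_2=t_2$, check the 4-simplex property by direct substitution into \eqref{4-simplex-eq}, exhibit noninvolutivity through a single component, and read the invariants off the formulas (indeed $w_1=z_1$, $u_1v_1=x_1y_1$, $w_2r_2=z_2t_2$, $w_2+r_2=z_2+t_2$ is exactly the paper's argument for the invariants). However, your noninvolutivity step contains a genuine error. On the slots carrying $z_2$ and $t_2$ the map acts as the pure transposition $(z_2,t_2)\mapsto(t_2,z_2)$ (since $w_2=t_2$ and $r_2=z_2$), so $S\circ S$ is the identity on these components; your own chain of substitutions ends with the value $z_2$, which \emph{agrees} with the identity, and the sentence ``tracking carefully one finds $S\circ S\neq\id$'' is a non sequitur. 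Since also $v_2=y_2$ and $w_1=z_1$ are fixed, only the first, second, third or seventh components can witness noninvolutivity. Your fallback, the second component, is the right choice (and the paper's), but ``manifestly not returned to $x_2$'' is an assertion, not a proof: one must compute $u_2\circ S=\frac{r_1u_2v_2}{v_2+r_1u_2-u_2}$, simplify using $r_1u_2=t_1x_2$ and $v_2=y_2$, and check the result is not identically $x_2$ (for instance by evaluating at a generic point such as $t_1=2$, $x_2=1$, $y_2=2$). The paper records an explicit expression for $u_2\circ S$ for precisely this purpose.

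A secondary caveat: your remark that the verification of \eqref{4-simplex-eq} ``should reduce to associativity of the matrix product, so no obstruction should arise'' is not sound. The local tetrahedron equation only generates a correspondence of \emph{potential} solutions, and the remark following Proposition \ref{case-a} shows that other closures of such a correspondence (there, $w_2=x_2$) produce maps that are \emph{not} 4-simplex maps. Hence the componentwise substitution you describe is genuinely necessary and cannot be replaced by an appeal to the refactorisation origin of the map; since you do carry out (or delegate to a computer algebra system) that substitution, this is a flaw of justification rather than of the proof's structure, but it should not be relied upon.
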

\begin{proof}
Maps \eqref{4-simplex-c} is obtained after substitution of $w_2=t_2$ to \eqref{corr-serg-c}. The 4-simplex property can be verified with straightforward substitution to the 4-simplex equation \eqref{4-simplex-eq}. Now, since $u_2\circ S = \frac{t_1^2x_2y_2}{(t_1^2-1)x_2+y_2}\neq x_2$, it follows that $S\circ S\neq\id$, thus map \eqref{4-simplex-c} is noninvolutive. Regarding the invariants, we have that $w_1\stackrel{\eqref{4-simplex-c}}{=}z_1$, $u_1v_1\stackrel{\eqref{4-simplex-c}}{=}x_1y_1$, $w_2r_2\stackrel{\eqref{4-simplex-c}}{=}z_2t_2$ and $w_2+t_2\stackrel{\eqref{4-simplex-c}}{=}r_2+t_2$.
\end{proof}

\subsubsection{Case d.}
Consider the tetrahedron map
\begin{equation}\label{Sergeev-d}
    T:(x,y,z)\rightarrow \left(\frac{xy}{x+z-xz},x+z-xz,\frac{(1-x)yz}{x+z-xy-xz}\right),
\end{equation}
with Lax representation
$$
 {\rm L}^3_{12}(u){\rm L}^3_{13}(v){\rm L}^3_{23}(w)= {\rm L}^3_{23}(z){\rm L}^3_{13}(y){\rm L}^3_{12}(x),
$$
where ${\rm L}^3_{ij}(x)$, $i,j=1,2,3$, $i<j$, are the $3\times 3$ generalisations of matrix
\begin{equation}\label{Lax-Sergeev-d}
    {\rm L}(x)=\begin{pmatrix} 
x & 1\\ 
1-x & 0
\end{pmatrix},
\end{equation}
and let ${\rm M}(x)\equiv {\rm L}^3_{12}(x)=\begin{pmatrix} 
x & 1 & 0\\ 
1-x & 0 & 0\\
0 & 0 & 1
\end{pmatrix}$
be its $3\times 3$ extension.

Now, consider the following generalisation of matrix ${\rm M}(x;\kappa)$:
\begin{equation}\label{Gen-Lax-Sergeev-d}
    {\rm K}(x_1,x_2)=\begin{pmatrix} 
x_1 & 1 & 0\\ 
1-x_1 & 0 & 0\\
0 & 0 & x_2
\end{pmatrix},
\end{equation}
such that for $x_2\rightarrow 1$, ${\rm K}(x_1,x_2)\rightarrow {\rm M}(x_1)$, and substitute ${\rm K}(x_1,x_2)$ to the local tetrahedron equation
\begin{align*}
  &{\rm K}^6_{123}(u_1,u_2){\rm K}^6_{145}(v_1,v_2){\rm K}^6_{246}(w_1,w_2){\rm K}^6_{356}(r_1,r_2)=\\
  &{\rm K}^6_{356}(t_1,t_2){\rm K}^6_{246}(z_1,z_2){\rm K}^6_{145}(y_1,y_2){\rm K}^6_{123}(x_1,x_2),
\end{align*}
where ${\rm K}^6_{ijk}$, $i=1,\ldots, 6$, $i<j<k$, are the $6\times 6$ extensions of matrix ${\rm K}(x_1,x_2)$ as in \eqref{6x6-extensions}. The above implies the following correspondence:
\begin{subequations}
\begin{align}\label{corr-serg-d}
  &u_1=\frac{x_1y_1}{x_1+z_1-x_1z_1},\quad u_2=y_2,\quad v_1=x_1+z_1-x_1z_1,\\ 
  & v_2=\frac{(1-t_1)x_2y_2}{y_2-t_1x_2},\quad w_1=\frac{(1-x_1)y_1z_1}{x_1+z_1-x_1y_1-x_1z_1},\quad r_1=\frac{t_1x_2}{y_2},\quad r_2=\frac{t_2z_2}{w_2}.  
\end{align}
\end{subequations}

The above correspondence defines a 4-simplex map for the choice of the free variable $w_2=t_2$. In particular, we have the following.

\begin{proposition}\label{case-d}
The following map\small
\begin{equation}\label{4-simplex-d}
(x_1,x_2,y_1,y_2,z_1,z_2,t_1,t_2)\stackrel{S}{\rightarrow} \left(\frac{x_1y_1}{x_1+z_1-x_1z_1}, y_2, x_1+z_1-x_1z_1, \frac{(1-t_1)x_2y_2}{y_2-t_1x_2},\frac{(1-x_1)y_1z_1}{x_1(1-y_1)+z_1(1-x_1)}, t_2, \frac{t_1x_2}{y_2}, z_2\right),
\end{equation}\normalfont
with invariants $I_1=y_2$, $I_2=x_1y_1$, $I_3=t_2z_2$ and $I_4=t_2+z_2$ is an eight-dimensional, noninvolutive 4-simplex map. 
\end{proposition}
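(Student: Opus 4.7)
The plan is to follow verbatim the template used in Propositions \ref{case-a}, \ref{case-b} and \ref{case-c}. First I would extract the map \eqref{4-simplex-d} directly from the correspondence \eqref{corr-serg-d} by fixing the only remaining free variable as $w_2 = t_2$; the last component $r_2 = t_2 z_2/w_2$ then collapses to $z_2$, while the remaining six image coordinates are already prescribed by \eqref{corr-serg-d}, so the $8$-tuple $S(x_1,\ldots,t_2)$ is uniquely determined.

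The four stated invariants follow by direct inspection of \eqref{4-simplex-d}: one reads off $u_2 = y_2$, $u_1 v_1 = \frac{x_1 y_1}{x_1+z_1-x_1 z_1}(x_1+z_1-x_1 z_1) = x_1 y_1$, $w_2 r_2 = t_2 z_2$ and $w_2 + r_2 = t_2 + z_2$, which are precisely $I_1,\ldots,I_4$. Noninvolutivity is equally short: substituting the image of $S$ into the formula $u_2 = y_2$ yields $u_2\circ S = v_2 = \frac{(1-t_1)x_2 y_2}{y_2 - t_1 x_2}$ as the second coordinate of $S\circ S$, and this differs from $x_2$ for generic values of $(x_2,y_2,t_1)$, so $S\circ S \neq \id$.

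The principal --- and only --- real obstacle is verifying the 4-simplex equation \eqref{4-simplex-eq} itself. The strategy is direct: evaluate both of the five-fold compositions
\[
S^{1234}\circ S^{1567}\circ S^{2589}\circ S^{368,10}\circ S^{479,10} \quad\text{and}\quad S^{479,10}\circ S^{368,10}\circ S^{2589}\circ S^{1567}\circ S^{1234}
\]
on a generic point of $\mathcal{X}^{10}$, and compare the ten resulting components pairwise. After five layers of compositions each component is a rational function of the ten inputs whose numerator and denominator are large polynomials; in particular the coordinates built from the $(x_2,y_2,t_1)$ block acquire iterated denominators of the form $y_2-t_1 x_2$ (and their images), and matching these on the two sides is the only genuinely algebraic check. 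Conceptually the verification is routine but voluminous, and in practice it is carried out with a computer algebra system by clearing the denominators and checking a list of polynomial identities.
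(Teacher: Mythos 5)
Your overall route is the same as the paper's: obtain \eqref{4-simplex-d} by fixing $w_2=t_2$ in \eqref{corr-serg-d} (so that $r_2=t_2z_2/w_2=z_2$), verify the 4-simplex property \eqref{4-simplex-eq} by direct (computer-assisted) substitution, read the invariants off the formulas, and establish noninvolutivity from a single component. Your noninvolutivity argument is in fact correct as stated: the second component of $S\circ S$ is $v_2=\frac{(1-t_1)x_2y_2}{y_2-t_1x_2}$, which differs from $x_2$ unless $t_1(y_2-x_2)=0$, so $S\circ S\neq\id$.

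The genuine gap is your verification of $I_1=y_2$. The identity $u_2=y_2$ only says that the image of the coordinate $x_2$ equals the present value of $y_2$; it is not an invariance statement. Invariance of the function $I_1=y_2$ would require $y_2\circ S=v_2=y_2$, whereas the printed map gives $v_2=\frac{(1-t_1)x_2y_2}{y_2-t_1x_2}$, which equals $y_2$ only on the locus $t_1(y_2-x_2)=0$; hence $y_2$ is \emph{not} an invariant of \eqref{4-simplex-d}, and "reading off $u_2=y_2$" cannot close this step. What the Lax computation actually produces at that entry is the relation $u_2r_1=t_1x_2$, so the conserved quantity visible in the formulas is $x_2t_1$ (the analogue of $I_3=x_3t_1$ in Theorem \ref{KP-map}), not $y_2$; a correct proof must either replace $I_1$ accordingly or explain why the list in the statement holds, which it does not as printed. (The paper's own proof writes $v_2\stackrel{\eqref{4-simplex-d}}{=}y_2$, which likewise contradicts the displayed formula for $v_2$ — the flaw originates in the paper, but your justification does not repair it.) The remaining three invariants are fine: $u_1v_1=x_1y_1$, $w_2r_2=t_2z_2$ and $w_2+r_2=t_2+z_2$ follow by inspection exactly as you say.
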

\begin{proof}
Maps \eqref{4-simplex-d} is obtained after substitution of $w_2=t_2$ to \eqref{corr-serg-d}. The 4-simplex property can be verified with straightforward substitution to the 4-simplex equation \eqref{4-simplex-eq}. Now, since $u_2\circ S = \frac{t_1^2x_2y_2}{(t_1^2-1)x_2+y_2}\neq x_2$, it follows that $S\circ S\neq\id$, thus map \eqref{4-simplex-d} is noninvolutive. For the invariants, we have that $v_2\stackrel{\eqref{4-simplex-d}}{=}y_2$, $u_1v_1\stackrel{\eqref{4-simplex-d}}{=}x_1y_1$, $w_2r_2\stackrel{\eqref{4-simplex-d}}{=}z_2t_2$ and $w_2+t_2\stackrel{\eqref{4-simplex-d}}{=}r_2+t_2$.
\end{proof}

\subsection{Kadomtsev--Petviashvili type of 4-simplex map}
Let $\mathcal{X}=A$ be an algebraic variety. Consider the KP-type tetrahedron map $T\in\End{A^6}$ given by:
\begin{subequations}\label{KP}
\begin{align}
    x_1 &\mapsto u_1=\frac{x_1y_1}{z_1+x_1(y_1-z_1+z_2-y_1z_2)},\\
    x_2 &\mapsto u_2=\frac{y_2(1-z_1)+x_2y_1(z_2-1)+x_2y_2(z_1-z_2)}{(1-x_2)(z_2-z_1)+(1-z_2)(y_2-x_2y_1)},\\
    y_1 &\mapsto v_1= z_1+x_1(y_1-z_1+z_2-y_1z_2),\\
    y_2 &\mapsto v_2=\frac{\left[z_1+x_1(y_1-z_1+z_2-y_1z_2)\right] \left[(1-x_2)y_2z_1-(y_1-y_2)x_2z_2\right]}{x_1 (y_1-y_2) (z_1-z_2)-z_1(x_2y_1-y_2)},\\
    z_1 &\mapsto w_1=\frac{(x_1-x_2)y_1z_1\left[y_1x_2(z_2-1)+(z_1-z_2)(x_2-1)-y_2(z_2-1)\right]}{\Lambda},\\
    z_2 &\mapsto w_2=\frac{\left[x_1z_2(y_1-y_2)+y_2z_1(x_1-1)\right]\left[y_1x_2(z_2-1)+(z_1-z_2)(x_2-1)-y_2(z_2-1)\right]}{\Lambda},
\end{align}
\end{subequations}
where
\begin{align}
   \Lambda=&y_2z_1(z_1-1)-x_2z_1\left[y_1(z_2-1)+y_2(z_1-z_2)\right]+x_1y_2(z_1-z_2)\left[1-z_1+x_2(z_1-z_2)\right]+\nonumber\\
   &x_1y_1\left[x_2z_2(y_1-y_2)(z_2-1)+(z_2-z_1)(1-x_2z_2)+z_1y_2(x_2-1)(z_2-1)\right],\label{lambda-expr} 
\end{align}
which was constructed by Dimakis and M\"uller-Hoissen \cite{Dimakis} in a study of soliton solutions of vector KP equations. Map \eqref{KP} admits the following Lax representation:
$$
 {\rm L}^3_{12}(u_1,u_2){\rm L}^3_{13}(v_1,v_2){\rm L}^3_{23}(w_1,w_2)= {\rm L}^3_{23}(z_1,z_2){\rm L}^3_{13}(y_1,y_2){\rm L}^3_{12}(x_1,x_2),
$$
where ${\rm L}^3_{ij}$, $i,j=1,2,3$, $i<j$, are the $3\times 3$ generalisations of matrix
\begin{equation}\label{Lax-KP}
    {\rm L}(x_1,x_2)=\begin{pmatrix} 
x_1 & x_2\\ 
1-x_1 & 1-x_2
\end{pmatrix},
\end{equation}
and let ${\rm M}(x_1,x_2)\equiv {\rm L}^3_{12}(x_1,x_2)=\begin{pmatrix} 
x_1 & x_2 & 0\\ 
1-x_1 & 1-x_2 & 0\\
0 & 0 & 1
\end{pmatrix}$
be its $3\times 3$ extension.

Now consider the following generalisation of matrix ${\rm M}(x;\kappa)$:
\begin{equation}\label{Gen-Lax-KP}
    {\rm K}(x_1,x_2,x_3)=\begin{pmatrix} 
x_1 & x_2 & 0\\ 
1-x_1 & 1-x_2 & 0 \\
0 & 0 & x_3
\end{pmatrix},
\end{equation}
such that for $x_3\rightarrow 1$, ${\rm K}(x_1,x_2,x_3)\rightarrow {\rm M}(x_1,x_2)$. We consider the $6\times 6$ extensions of matrix ${\rm K}(x_1,x_2,x_3)$, namely the following
%\begin{subequations}\label{6x6-extensions}
\begin{align*}
    & {\rm K}^6_{123}= \begin{pmatrix} 
x_1 & x_2 & 0 & 0 & 0 & 0\\ 
1-x_1 & 1-x_2 & 0 & 0 & 0 & 0\\
0 & 0 & x_3 & 0 & 0 & 0\\
0 & 0 & 0 & 1 & 0 & 0 \\
0 & 0 & 0 & 0 & 1 & 0 \\
0 & 0 & 0 & 0 & 0 & 1 \\
\end{pmatrix}, \quad
{\rm K}^6_{145}= \begin{pmatrix} 
x_1 & 0 & 0 & x_2 & 0 & 0\\ 
0 & 1 & 0 & 0 & 0 & 0 \\
0 & 0 & 1 & 0 & 0 & 0 \\
1-x_1 & 0 & 0 & 1-x_2 & 0 & 0\\
0 & 0 & 0 & 0 & x_3 & 0\\
0 & 0 & 0 & 0 & 0 & 1 \\
\end{pmatrix},\\
& {\rm K}^6_{246}= \begin{pmatrix} 
1 & 0 & 0 & 0 & 0 & 0 \\
0 & x_1 & 0 & x_2 & 0 & 0\\ 
0 & 0 & 1 & 0 & 0 & 0 \\
0 & 1-x_1 & 0 & 1-x_2 & 0 & 0\\
0 & 0 & 0 & 0 & 1 & 0 \\
0 & 0 & 0 & 0 & 0 & x_3
\end{pmatrix},\quad 
 {\rm K}^6_{356}= \begin{pmatrix} 
 1 & 0 & 0 & 0 & 0 & 0 \\
 0 & 1 & 0 & 0 & 0 & 0 \\
0 & 0 & x_1 & 0 & x_2 & 0\\ 
0 & 0 & 0 & 1 & 0 & 0 \\
0 & 0 & 1-x_1 & 0 & 1-x_2 & 0\\
0 & 0 & 0 & 0 & 0 & x_3
\end{pmatrix},
\end{align*}
%\end{subequations}
and substitute them to the local tetrahedron equation
\begin{align*}
  &{\rm K}^6_{123}(u_1,u_2,u_3){\rm K}^6_{145}(v_1,v_2,v_3){\rm K}^6_{246}(w_1,w_2,w_3){\rm K}^6_{356}(r_1,r_2,r_3)=\\
  &{\rm K}^6_{356}(t_1,t_2,t_3){\rm K}^6_{246}(z_1,z_2,z_3){\rm K}^6_{145}(y_1,y_2,y_3){\rm K}^6_{123}(x_1,x_2,x_3).
\end{align*}
The above implies the following correspondence:
\begin{subequations}\label{corr-KP}
\begin{align}
& u_1=\frac{x_1y_1}{z_1+x_1(y_1-z_1+z_2-y_1z_2)},\\
 & u_2=\frac{y_2(1-z_1)+x_2y_1(z_2-1)+x_2y_2(z_1-z_2)}{(1-x_2)(z_2-z_1)+(1-z_2)(y_2-x_2y_1)},\\
 &u_3=\frac{(t_1-t_2)x_3y_3}{(t_1-1)x_3-(t_2-1)y_3},\\
& v_1= z_1+x_1(y_1-z_1+z_2-y_1z_2),\\
& v_2=\frac{\left[z_1+x_1(y_1-z_1+z_2-y_1z_2)\right] \left[(1-x_2)y_2z_1-(y_1-y_2)x_2z_2\right]}{x_1 (y_1-y_2) (z_1-z_2)-z_1(x_2y_1-y_2)},\\
&v_3=\frac{(t_1-t_2)x_3y_3}{t_1x_3-t_2y_3},\\
&w_1=\frac{(x_1-x_2)y_1z_1\left[y_1x_2(z_2-1)+(z_1-z_2)(x_2-1)-y_2(z_2-1)\right]}{\Lambda},\\
& w_2=\frac{\left[x_1z_2(y_1-y_2)+y_2z_1(x_1-1)\right]\left[y_1x_2(z_2-1)+(z_1-z_2)(x_2-1)-y_2(z_2-1)\right]}{\Lambda},\\
&r_1=\frac{(t_1-1) x_3-(t_2-1)y_3}{(t_1-t_2)y_3}t_1,\\
&r_2=\frac{(t_1-1)x_3-(t_2-1)y_3}{(t_1-t_2)x_3}t_2,\\
&r_3=\frac{t_3z_3}{w_3},
\end{align}
\end{subequations}
where $\Lambda$ is given by \eqref{lambda-expr}.

However, for the choice of the free variable $w_3=t_3$ the above correspondence defines a 4-simplex map. In particular, we have the following.

\begin{theorem}\label{KP-map}
The map $S\in\End(A^{12})$ given by
\begin{subequations}\label{4-simplex-KP}
\begin{align}
    x_1 &\mapsto u_1=\frac{x_1y_1}{z_1+x_1(y_1-z_1+z_2-y_1z_2)},\\
    x_2 &\mapsto u_2=\frac{y_2(1-z_1)+x_2y_1(z_2-1)+x_2y_2(z_1-z_2)}{(1-x_2)(z_2-z_1)+(1-z_2)(y_2-x_2y_1)},\\
    x_3 &\mapsto  u_3=\frac{(t_1-t_2)x_3y_3}{(t_1-1)x_3-(t_2-1)y_3},\\
    y_1 &\mapsto v_1= z_1+x_1(y_1-z_1+z_2-y_1z_2),\\
    y_2 &\mapsto v_2=\frac{\left[z_1+x_1(y_1-z_1+z_2-y_1z_2)\right] \left[(1-x_2)y_2z_1-(y_1-y_2)x_2z_2\right]}{x_1 (y_1-y_2) (z_1-z_2)-z_1(x_2y_1-y_2)},\\
    y_3 &\mapsto v_3=\frac{(t_1-t_2)x_3y_3}{t_1x_3-t_2y_3},\\
    z_1 &\mapsto w_1=\frac{(x_1-x_2)y_1z_1\left[y_1x_2(z_2-1)+(z_1-z_2)(x_2-1)-y_2(z_2-1)\right]}{\Lambda},\\
    z_2 &\mapsto w_2=\frac{\left[x_1z_2(y_1-y_2)+y_2z_1(x_1-1)\right]\left[y_1x_2(z_2-1)+(z_1-z_2)(x_2-1)-y_2(z_2-1)\right]}{\Lambda},\\
    z_3 &\mapsto w_3=t_3,\\
    t_1 &\mapsto r_1=\frac{(t_1-1) x_3-(t_2-1)y_3}{(t_1-t_2)y_3}t_1,\\
    t_2 &\mapsto r_2=\frac{(t_1-1)x_3-(t_2-1)y_3}{(t_1-t_2)x_3}t_2,\\
    t_3 &\mapsto r_3=z_3,
\end{align}
\end{subequations}
where $\Lambda$ is given by \eqref{lambda-expr}, is a twelve-dimensional, noninvolutive 4-simplex map which possesses the following functionally independent invariants 
\begin{equation}\label{KP-invariants}
    I_1=x_1y_1,\quad I_2=(y_2-1)(z_2-1),\quad I_3=x_3t_1,\quad I_4=y_3 (1-t_2),\quad I_5=z_3t_3,\quad I_6=z_3+t_3.
\end{equation}

Moreover, map $S$ is birational.
\end{theorem}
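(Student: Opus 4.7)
The plan is to follow the template of Propositions \ref{case-a}--\ref{case-d}, since the map $S$ is obtained from the correspondence \eqref{corr-KP} by the specialisation $w_3=t_3$. To establish the 4-simplex property, I would substitute the explicit components of $S$ given in \eqref{4-simplex-KP} into both sides of equation \eqref{4-simplex-eq} and verify the resulting identity by direct (computer-assisted) symbolic computation, checking that each of the twelve rational components of $S^{1234}\circ S^{1567}\circ S^{2589}\circ S^{368,10}\circ S^{479,10}$ agrees with the corresponding component of the reversed composition. Non-involutivity I would verify by tracking a single component: one computes $u_3\circ S$ as a rational function of the initial variables and observes that it is not equal to $x_3$, so that $S\circ S\neq\id$.

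The invariants split naturally. The first two, $I_1=x_1 y_1$ and $I_2=(y_2-1)(z_2-1)$, are inherited from the KP tetrahedron block \eqref{KP}, so that $u_1 v_1=x_1 y_1$ and $(v_2-1)(w_2-1)=(y_2-1)(z_2-1)$ follow from the underlying tetrahedron structure studied in \cite{Dimakis}. The next two require short direct checks: $u_3 r_1=x_3 t_1$ follows after cancelling the common factor $(t_1-1)x_3-(t_2-1)y_3$, and $v_3(1-r_2)=y_3(1-t_2)$ follows from the simplification $(t_1-t_2)x_3-t_2\bigl[(t_1-1)x_3-(t_2-1)y_3\bigr]=(1-t_2)(t_1 x_3-t_2 y_3)$. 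The last two, $w_3 r_3=z_3 t_3$ and $w_3+r_3=z_3+t_3$, are immediate from $w_3=t_3$ and $r_3=z_3$. Functional independence is established by exhibiting a $6\times 6$ minor of the Jacobian of $(I_1,\ldots,I_6)$ with respect to $(x_1,x_2,x_3,y_1,\ldots,t_3)$ that does not vanish identically.

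For birationality, I would construct the inverse explicitly by decoupling the system. The third and last blocks give $t_3=w_3$ and $z_3=r_3$ at once. The remaining index-$3$ quartet $(x_3,y_3,t_1,t_2)$ is recovered from $(u_3,v_3,r_1,r_2)$ by a rational inversion: the ratios $r_1/u_3$ and $r_2/v_3$ yield $t_1$ and $t_2$ rationally in terms of $u_3,v_3,r_1,r_2$, after which $x_3$ and $y_3$ are obtained from $u_3$ and $v_3$ linearly. Finally $(x_1,x_2,y_1,y_2,z_1,z_2)$ is recovered from $(u_1,u_2,v_1,v_2,w_1,w_2)$ via the rational inverse of the KP tetrahedron map of \cite{Dimakis}. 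Composing these three partial inversions yields a rational inverse $S^{-1}$.

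The main obstacle is the symbolic verification of the 4-simplex identity \eqref{4-simplex-eq} itself: because of the size of the polynomial $\Lambda$ in \eqref{lambda-expr} and the rapid growth of the rational expressions under the five-fold composition, the expansion is unmanageable by hand and must be delegated to a computer algebra system; once this verification is performed, all remaining claims reduce to the elementary computations outlined above.
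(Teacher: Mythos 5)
Your proposal follows essentially the same route as the paper: the 4-simplex identity is verified by direct (computer-assisted) substitution into \eqref{4-simplex-eq}, noninvolutivity by tracking one component (the paper uses $v_3\circ S\neq y_3$; your $u_3\circ S\neq x_3$ works equally well), the invariants by the same elementary identities, functional independence by the rank of the $6\times 12$ gradient matrix, and birationality by an explicit rational inverse, which the paper writes out in full as \eqref{inverse-4-simplex-KP}. One small correction to your inversion sketch: in the $(x_3,y_3,t_1,t_2)$ block the parameters are not recovered from the plain ratios $r_1/u_3$, $r_2/v_3$; the correct expressions are $t_1=\frac{u_3r_1}{u_3r_1+v_3(1-r_1)}$, $t_2=\frac{u_3r_2}{u_3r_2+v_3(1-r_2)}$, $x_3=r_1(u_3-v_3)+v_3$, $y_3=r_2(u_3-v_3)+v_3$, which your block-decoupling strategy does yield after an elementary elimination, so the overall argument goes through.
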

\begin{proof}
Map \eqref{4-simplex-KP} follows after substitution of $w_3=t_3$ to \eqref{corr-KP}. It can be verified that \eqref{4-simplex-KP} is a 4-simplex map by straightforward substitution to the 4-simplex equation \eqref{4-simplex-eq}. Now, since $v_3\circ S =\frac{(t_1-t_2)x_3y_3(t_1x_3-t_2y_3)}{t_1x_3(t_1x_3-2t_2y_3)+t_2y_3\left[x_3+(t_2-1)y_3\right]} \neq y_3$, it follows that $S\circ S\neq\id$, therefore map \eqref{4-simplex-KP} is noninvolutive.

Now, we have that $u_1v_1\stackrel{\eqref{4-simplex-KP}}{=}x_1y_1$, $(v_2-1)(w_2-1)\stackrel{\eqref{4-simplex-KP}}{=}(y_2-1)(z_2-1)$, $u_3r_1\stackrel{\eqref{4-simplex-KP}}{=}x_3t_1$, $v_3 (1-r_2)\stackrel{\eqref{4-simplex-KP}}{=}y_3 (1-t_2)$, $w_3r_3\stackrel{\eqref{4-simplex-KP}}{=}z_3t_3$ and $w_3+r_3\stackrel{\eqref{4-simplex-KP}}{=}z_3+t_3$, thus the quantities $I_i$, $i=1,\ldots 6$ in \eqref{KP-invariants} are invariants of map \eqref{4-simplex-KP}. Also, if $\rho_i=\nabla I_i$, $i=1,\ldots 6$, then the $6\times 12$ matrix $\left[\rho_1 \cdots \rho_6\right],$ consisting of vector columns $\rho_i$, $i=1,\ldots 6$ has rank 6, therefore $I_i$, $i=1,\ldots 6$ are functionally independent.

Finally, the inverse of map \eqref{4-simplex-KP} is given by:
\begin{subequations}\label{inverse-4-simplex-KP}
\begin{align}
    u_1 &\mapsto x_1=\frac{u_1v_1\left[w_1 (u_2-1)(v_2-1)+(v_2-v_1)(1-u_1+(u_1-u_2)w_2)\right]}{E},\\
    u_2 &\mapsto x_2=\frac{\left[u_1 v_2 (w_1-1)-u_2w_1\right] \left[(v_2-1)(u_2-1)w_1+(v_1-v_2) (u_1-1-(u_1-u_2)w_2)\right]}{-E},\\
    u_3 &\mapsto  x_3=r_1 (u_3-v_3)+v_3,\\
    v_1 &\mapsto y_1=\frac{E}{w_1 (u_2-1)(v_2-1)+(v_2-v_1)(1-u_1+(u_1-u_2)w_2)},\\
    v_2 &\mapsto y_2=u_2w_2-u_1v_2(w_2-1),\\
    v_3 &\mapsto y_3=r_2 (u_3-v_3)+v_3,\\
    w_1 &\mapsto z_1=\frac{(u_1-u_2)v_1w_1}{u_1(v_1-v_2)+(u_1v_2-u_2)w_1},\\
    w_2 &\mapsto z_2=\frac{v_2(u_1-1)(w_2-1)-w_2(u_2-1)}{1-u_1v_2+(u_1v_2-u_2)w_2},\\
    w_3 &\mapsto z_3=t_3,\\
    r_1 &\mapsto t_1=\frac{u_3r_1}{u_3r_1+v_3(1-r_1)}\\
    r_2 &\mapsto t_2=\frac{u_3r_2}{u_3r_2+v_3(1-r_2)},\\
    r_3 &\mapsto t_3=z_3.
\end{align}
\end{subequations}
where 
\begin{align*}
    E=&u_1^2v_2(v_1-v_2)(w_1-1)(w_2-1)+u_2w_1\left[v_1(u_2w_2-1)+v_2-w_2(u_2+v_2-1)\right]+\\
    &u_1v_1\left[w_1(1+u_2v_2)-v_2\right]+u_1v_1w_2\left[u_2-1+v_2-u_2w_1(v_2+1)\right]+\\
    &u_1v_2\left[v_2(w_1-1)(w_2-1)+w_2-u_2(w_1+w_2)+(2u_2-1)w_1w_2\right]
\end{align*}
Therefore, map \eqref{4-simplex-KP} is birational.
\end{proof}

\section{Darboux transformations scheme}\label{Darboux_scheme}
In this section, we demonstrate a method for constructing 4-simplex maps, which can be restricted to parametric 4-simplex maps on the level sets of the determinant of the associated Lax matrix, using Darboux transformations. As an illustrative example of this method we use Darboux transformations related to the NLS equation, and we construct NLS type of birational parametric 4-simplex maps.

\subsection{Darboux transformations}
Let ${\bm{u}}_t=F({\bm{u}},{\bm{u}}_x,{\bm{u}}_{xx},\ldots)$, ${\bm{u}}=(u_1(x,t),u_2(x,t),\ldots u_N(x,t))$, be a system of evolution type integrable PDEs with Lax pair $(\mathcal{L},\mathcal{A})$, where $\mathcal{L}={\rm{D}}_x-{\rm U}(u;\lambda)$ and $\mathcal{A}={\rm{D}}_t-{\rm V}(u;\lambda)$. That is,
$$
{\bm{u}}_t=F({\bm{u}},{\bm{u}}_x,{\bm{u}}_{xx},\ldots)~~ \Leftrightarrow ~~ {\rm{U}}_t-{\rm{V}}_x+{\rm{U}}{\rm{V}}-{\rm{V}}{\rm{U}}=0.
$$

\begin{definition}\label{def-Darboux}
A Darboux transformation is an invertible matrix $\rm{B}$, such that
\begin{gather}
\label{Def-Darboux}
\rm{B}\big(D_x-\rm{U}(\bm{u};\lambda)\big)\rm{B}^{-1}=
D_x-U({\bm{u}_{10}};\lambda),\qquad \rm{B}\big(D_t-\rm{V}(\bm{u};\lambda)\big)\rm{B}^{-1}=
D_t-\rm{V}({\bm{u}_{10}};\lambda).
\end{gather}
Matrix $\rm{B}$ is called Darboux matrix. 
\end{definition}

For example, consider the NLS equation, which in its most popular form appears as the coupled NLS system
\begin{equation}\label{NLS-eq}
    p_t=\frac{1}{2}p_{xx}-4p^2q,\quad  q_t=-\frac{1}{2}q_{xx}+4pq^2,
\end{equation}
where $p=p(x,t)$ and $q=q(x,t)$, $x\in\mathbb{R}$, $t>0$, and indices denote partial derivatives. System \eqref{NLS-eq} possesses the following Lax pair
\begin{equation}\label{Lax-NLS}
\mathcal{L} =D_x-\lambda \left(\begin{array}{cc}1 & 0\\0 & -1\end{array}\right)-\left(\begin{array}{cc}0 & 2p\\2q & 0\end{array}\right),\quad
\mathcal{A}=D_t-\lambda^2 \left(\begin{array}{cc}1 & 0\\0 & -1\end{array}\right)-\lambda\left(\begin{array}{cc}0 & 2p\\2q & 0\end{array}\right)
\end{equation} 

A Darboux transformation for $\mathcal{L}$ and $\mathcal{A}$ in \eqref{Lax-NLS} is:
\begin{equation}\label{DM-NLS}
  {\rm B}=\lambda \left(
     \begin{array}{cc}
         1 & 0\\
         0 & 0
     \end{array}\right)+\left(
     \begin{array}{cc}
         f & p\\
         \tilde{q} & 1
     \end{array}\right),
\end{equation}
where its entries obey the system of equations
$\partial_x f=2 (pq-\tilde{p}\tilde{q}), \quad \partial_x p =2 (p f -\tilde{p}),\quad \partial_x \tilde{q}=2 ( q-\tilde{q}f)$, i.e. the so-called B\"acklund transformation. A first integral of this  system of differential equations is 
$\partial_x (f-p\tilde{q})=0$, which implies that $\partial_x (\det{\rm B})=\partial_x (f-p\tilde{q})=0$, i.e. $\det{\rm B}=const.$

\subsection{Formulation of the method}
This is a generalisation of the methods introduced in \cite{Sokor-Sasha, Sokor-2020} for constructing parametric Yang--Baxter and tetrahedron maps. The basic steps of the method are summarised in Figure \ref{Darboux scheme}. 

In particular, the method consists of three basic steps:

\textbf{Step $I$}: We start with a Darboux matrix, ${\rm B(x;k)}$, which generates a tetrahedron map via the local Yang--Baxter equation. We consider its $3\times 3$ extension ${\rm M(x;k)}={\rm B^3_{12}(x;k)}$, and let ${\rm K}(x_1,x_2;\kappa)$ be a generalisation of ${\rm M(x;k)}$, as in \eqref{K-matrix}, such as ${\rm K}(x_1,x_2;\kappa)\rightarrow {\rm M(x_1;k)}$, for $x_2\rightarrow 1$. Then, we substitute ${\rm K}(x_1,x_2;\kappa)$ into the local tetrahedron equation \eqref{local-tetrahedron}, and we derive a 4-simplex map.

\textbf{Step $II$}: Matrix ${\rm M(x;k)}$, being a $3\times 3$ generalisation of a Darboux matrix, has constant determinant. We demand that its generalisation, ${\rm K}(x_1,x_2;\kappa)$ has also constant determinant. Then, we restrict the 4-simplex map to lower-dimensional parametric maps on the level sets $\det{\rm K}(x_1,x_2;\kappa)=C$.

\textbf{Step $III$}: These parametric maps are solutions to the parametric 4-simplex equation \eqref{Par-4-simplex-eq}.

\begin{figure}[ht]
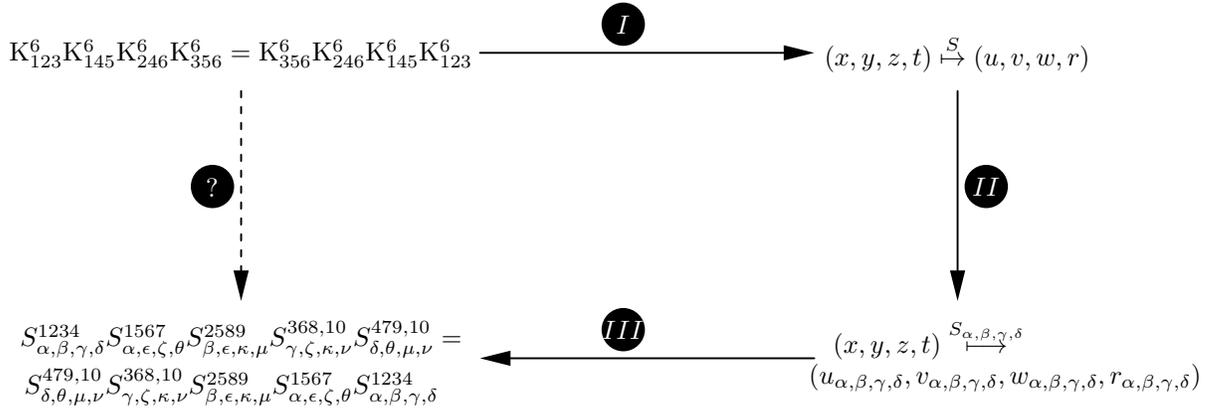

\centering
\centertexdraw{ 
\setunitscale 0.5
\move (-2.5 2.2)  \arrowheadtype t:F \avec(1 2.2)
\move (-5 1.8) \lpatt(0.067 0.1) \arrowheadtype t:F \avec(-5 -.4) \lpatt()
\move (2.5 1.8)  \arrowheadtype t:F \avec(2.5 -.4) 
\move (1 -1)  \arrowheadtype t:F \avec(-2.5 -1) 
\textref h:C v:C \htext(-5 2.2){\small ${\rm K}^6_{123}{\rm K}^6_{145}{\rm K}^6_{246}{\rm K}^6_{356}=
  {\rm K}^6_{356}{\rm K}^6_{246}{\rm K}^6_{145}{\rm K}^6_{123}$}
\textref h:C v:C \htext(2.5 2.2){\small $(x,y,z,t)\stackrel{S}{\mapsto}(u,v,w,r)$}
\textref h:C v:C \htext(2.2 -.8){\small $(x,y,z,t)\stackrel{S_{\alpha,\beta,\gamma,\delta}}{\longmapsto}$}
\textref h:C v:C \htext(3 -1.2){\small $(u_{\alpha,\beta,\gamma,\delta},v_{\alpha,\beta,\gamma,\delta},w_{\alpha,\beta,\gamma,\delta},r_{\alpha,\beta,\gamma,\delta})$}
\textref h:C v:C \htext(-5 -.8){\small $S^{1234}_{\alpha,\beta,\gamma,\delta} S^{1567}_{\alpha,\epsilon,\zeta,\theta}  S^{2589}_{\beta,\epsilon,\kappa,\mu} S^{368,10}_{\gamma,\zeta, \kappa, \nu} S^{479,10}_{\delta,\theta,\mu,\nu}=$} 
\htext(-5.1 -1.3){\small $S^{479,10}_{\delta,\theta,\mu,\nu} S^{368,10}_{\gamma,\zeta, \kappa, \nu} S^{2589}_{\beta,\epsilon,\kappa,\mu} S^{1567}_{\alpha,\epsilon,\zeta,\theta}  S^{1234}_{\alpha,\beta,\gamma,\delta}$}
  \move (-5.3 .8)\fcir f:0 r:.23
\textref h:C v:C \small{\color{white} \htext(-5.3 .8){?}}
\move (2.8 .8)\fcir f:0 r:.23
\textref h:C v:C \small{\color{white} \htext(2.8 .8){$II$}}
\move (-1 2.5)\fcir f:0 r:.23
\textref h:C v:C \small{\color{white}\htext(-1 2.5){$I$}}
\move (-1 -.7)\fcir f:0 r:.23
\textref h:C v:C \small{\color{white}\htext(-1 -.7){$III$}}
}
\caption{Darboux construction scheme.}\label{Darboux scheme}
\end{figure}

In the next sections, we apply this method to the NLS type Darboux matrix \eqref{DM-NLS} in order to construct 4-simplex maps.

\subsection{NLS type 4-simplex maps}
Changing $(p,\tilde{q},f+\lambda)\rightarrow (x_1,x_2,X)$ in \eqref{DM-NLS}, we define the following matrix
\begin{equation}\label{M-NLS}
  {\rm B}(x_1,x_2,X)=\left(
     \begin{array}{cc}
         X & x_1\\
         x_2 & 1
     \end{array}\right).
\end{equation}
This matrix was used in \cite{Sokor-2020} to derive a birational, noninvolutive tetrahedron map by substitution of \eqref{M-NLS} to the local Yang--Baxter equation.

Now, we consider its $3\times 3$ extension of ${\rm B}$, namely ${\rm M}(x_1,x_2,X)\equiv {\rm B}^3_{12}(x_1,x_2,X)=\begin{pmatrix} 
X & x_1 & 0\\ 
x_2 & 1 & 0\\
0 & 0 & 1
\end{pmatrix}$, and let ${\rm K}(x_1,x_2,x_3,X)$, given by
\begin{equation}\label{NLS-Lax}
    {\rm K}(x_1,x_2,x_3,X)=\begin{pmatrix} 
X & x_1 & 0\\ 
x_2 & 1 & 0\\
0 & 0 & x_3
\end{pmatrix},
\end{equation}
such that ${\rm K}(x_1,x_2,x_3,X)\rightarrow {\rm M}(x_1,x_2,X)$, for $x_3\rightarrow 1$. Next, we substitute ${\rm K}(x_1,x_2,x_3,X)$ to the local tetrahedron equation
\begin{align}
  &{\rm K}^6_{123}(u_1,u_2,u_3,U){\rm K}^6_{145}(v_1,v_2,v_3,V){\rm K}^6_{246}(w_1,w_2,w_3,W){\rm K}^6_{356}(r_1,r_2,r_3,R)=\nonumber\\
  &{\rm K}^6_{356}(t_1,t_2,t_3,T){\rm K}^6_{246}(z_1,z_2,z_3,Z){\rm K}^6_{145}(y_1,y_2,y_3,Y){\rm K}^6_{123}(x_1,x_2,x_3,X),\label{NLS-local-tetra}
\end{align}
which is equivalent with the following correspondence:
\begin{subequations}\label{NLS-correspondence}
\begin{align}
    u_1&=\frac{x_1(y_1y_2-Y)+y_1z_2}{z_1z_2-Z},\quad u_2=\frac{x_2Z+y_2z_1X}{XY}U,\label{NLS-correspondence-a}\\
    v_1&=\frac{y_1Z+x_1z_1(y_1y_2-Y)}{y_1y_2z_1(x_1y_2+z_2)X-(x_1y_2z_1+z_1z_2-Z)XY+x_2[y_1z_2+x_1(y_1y_2-Y)]Z}\frac{XY}{U}\label{NLS-correspondence-b}\\
    v_2&=x_2z_2+y_2X,\quad v_3=y_3,\label{NLS-correspondence-c}\\
    V&=\frac{XY}{U},\label{NLS-correspondence-d}\\
    w_1&=\frac{[x_2y_1Z+(y_1y_2-Y)z_1X](z_1z_2-Z)}{y_1y_2z_1(x_1y_2+z_2)X-(x_1y_2z_1+z_1z_2-Z)XY+x_2[y_1z_2+x_1(y_1y_2-Y)]Z}\label{NLS-correspondence-e}\\
    w_2&=x_1y_2+z_2,\label{NLS-correspondence-f}\\ W&=\frac{(x_1x_2-X)(z_1z_2-Z)YZ}{y_1y_2z_1(x_1y_2+z_2)X-(x_1y_2z_1+z_1z_2-Z)XY+x_2[y_1z_2+x_1(y_1y_2-Y)]Z},\label{NLS-correspondence-g}\\
    r_1&=\frac{t_1y_3}{u_3},\quad r_2=\frac{t_2x_3}{y_3},\quad r_3=\frac{t_3z_3}{w_3},\quad R=\frac{Tx_3}{u_3}.\label{NLS-correspondence-h}
\end{align}
\end{subequations}
The above correspondence does not satisfy the 4-simplex equation for arbitrary choice of $u_3$, $U$ and $w_3$. However, as we shall see below, there is at least a choice of $U$ for which \eqref{NLS-correspondence} defines a 4-simplex map.

In particular, the determinant of equation \eqref{NLS-local-tetra} is
$$
(U-u_1u_2)u_3(V-v_1v_2)v_3(W-w_1w_2)w_3(R-r_1r_2)r_3=(X_1-x_1x_2)x_3(Y-y_1y_2)y_3(Z-z_1z_2)z_3(T-t_1t_2)t_3.
$$
We choose $(U-u_1u_2)u_3=(X_1-x_1x_2)x_3$, $(W-w_1w_2)w_3=(Z-z_1z_2)z_3$ and $u_3=x_3$. Then, the following holds.
\begin{proposition}
The system consisting of equation \eqref{NLS-local-tetra} together with equations 
$$
(U-u_1u_2)u_3=(X-x_1x_2)x_3,\quad (W-w_1w_2)w_3=(Z-z_1z_2)z_3,\quad u_3=x_3,
$$
has a unique solution, namely a map 
$
(x_1,x_2,x_3,X,y_1,y_2,y_3,Y,z_1,z_2,z_3,Z,t_1,t_2,t_3,T)\overset{S}{\longrightarrow} (u_1,u_2,u_3,U,v_1,v_2,v_3,V,w_1,w_2,w_3,W,r_1,r_2,r_3,R),
$
given by,
\begin{subequations}\label{4-simplex-NLS-16D} 
\begin{align}
x_1\mapsto u_1 &=\frac{x_1(y_1y_2-Y)+y_1z_2}{z_1z_2-Z},\label{4-simplex-NLS-16D-a}\\
x_2\mapsto u_2 &=\frac{(x_1x_2-X)(y_2z_1X+x_2Z)(z_1z_2-Z)}{y_1y_2z_1(x_1y_2+z_2)X-(x_1y_2z_1+z_1z_2-Z)XY+x_2[y_1z_2+x_1(y_1y_2-Y)]Z},\label{4-simplex-NLS-16D-b}\\
x_3\mapsto u_3 &=x_3,\label{4-simplex-NLS-16D-c}\\
X\mapsto U &=\frac{(x_1x_2-X)(z_1z_2-Z)X}{y_1y_2z_1(x_1y_2+z_2)X-(x_1y_2z_1+z_1z_2-Z)XY+x_2[y_1z_2+x_1(y_1y_2-Y)]Z},\label{4-simplex-NLS-16D-d}\\
y_1\mapsto v_1 &=\frac{x_1z_1(y_1y_2-Y)+y_1Z}{(x_1x_2-X)(z_1z_2-Z)},\label{4-simplex-NLS-16D-e}\\
y_2\mapsto v_2 &=x_2z_2+y_2X,\label{4-simplex-NLS-16D-f}\\
y_3\mapsto v_3 &=y_3,\label{4-simplex-NLS-16D-g}\\
Y\mapsto V &=\frac{y_1y_2z_1(x_1y_2+z_2)X-(x_1y_2z_1+z_1z_2-Z)XY+x_2[y_1z_2+x_1(y_1y_2-Y)]Z}{(x_1x_2-X)(z_1z_2-Z)},\label{4-simplex-NLS-16D-h}\\
z_1\mapsto w_1 &=\frac{[x_2y_1Z+z_1(y_1y_2-Y)X](z_1z_2-Z)}{y_1y_2z_1(x_1y_2+z_2)X-(x_1y_2z_1+z_1z_2-Z)XY+x_2[y_1z_2+x_1(y_1y_2-Y)]Z},\label{4-simplex-NLS-16D-i}\\
z_2\mapsto w_2 &=x_1y_2+z_2,\label{4-simplex-NLS-16D-j}\\
z_3\mapsto w_3 &=z_3,\label{4-simplex-NLS-16D-k}\\
Z\mapsto W &=\frac{(x_1x_2-X)(z_1z_2-Z)YZ}{y_1y_2z_1(x_1y_2+z_2)X-(x_1y_2z_1+z_1z_2-Z)XY+x_2[y_1z_2+x_1(y_1y_2-Y)]Z},\label{4-simplex-NLS-16D-l}\\
t_1\mapsto r_1 &=\frac{t_1y_3}{x_3},\label{4-simplex-NLS-16D-m}\\
t_2\mapsto r_2 &=\frac{t_2x_3}{y_3},\label{4-simplex-NLS-16D-n}\\
t_3\mapsto t_3 &=r_3,\label{4-simplex-NLS-16D-0}\\
T\mapsto R &=T.\label{4-simplex-NLS-16D-p}
\end{align}
\end{subequations}
Map \eqref{4-simplex-NLS-16D} is a sixteen-dimensional noninvolutive 4-simplex map.
\end{proposition}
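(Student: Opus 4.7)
The plan is to exhibit map \eqref{4-simplex-NLS-16D} as the unique simultaneous solution of the correspondence \eqref{NLS-correspondence} (produced by the local tetrahedron equation \eqref{NLS-local-tetra}) and the three supplementary constraints, then to verify the 4-simplex property and noninvolutivity separately.

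First I would observe that the correspondence \eqref{NLS-correspondence} leaves exactly three quantities underdetermined, namely $u_3$, $U$ and $w_3$. Indeed, $u_1$ is fixed by \eqref{NLS-correspondence-a}; $u_2$ depends linearly on $U$ via \eqref{NLS-correspondence-a}; $V=XY/U$ is determined by \eqref{NLS-correspondence-d}; and $W$ is given by \eqref{NLS-correspondence-g} in closed form in the $x$-, $y$-, $z$-variables. All of $r_1$, $r_2$, $r_3$, $R$ in \eqref{NLS-correspondence-h} depend only on $u_3$ and $w_3$. The supplementary constraint $u_3=x_3$ is immediate; substituting it into $(U-u_1u_2)u_3=(X-x_1x_2)x_3$ yields a linear equation in $U$ (since $u_1u_2$ is proportional to $U$) whose solution reproduces \eqref{4-simplex-NLS-16D-d} and hence, through the correspondence, the expressions \eqref{4-simplex-NLS-16D-b}, \eqref{4-simplex-NLS-16D-h}, \eqref{4-simplex-NLS-16D-l} for $u_2$, $V$, $W$. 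Finally $(W-w_1w_2)w_3=(Z-z_1z_2)z_3$ is a linear equation in $w_3$ whose unique solution is $w_3=z_3$, which by \eqref{NLS-correspondence-h} forces $r_3=t_3$ and $R=T$. This uniquely recovers map \eqref{4-simplex-NLS-16D}.

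Next I would verify the 4-simplex property by direct substitution of \eqref{4-simplex-NLS-16D} into \eqref{4-simplex-eq}. Conceptually this reflects the fact that the matrix refactorisation \eqref{NLS-local-tetra} is a Lax representation, so the two compositions arise from two equivalent orderings of the same underlying product of $6\times 6$ Lax matrices; nevertheless the formal check reduces to comparing two sixteen-tuples of rational functions. For noninvolutivity, one component suffices: composing \eqref{4-simplex-NLS-16D-m} with $S$ gives $r_1\circ S=(t_1y_3/x_3)(v_3/u_3)=t_1y_3^2/x_3^2\neq t_1$, so $S\circ S\neq\id$.

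The main obstacle is the size of the 4-simplex verification: after the reduction, each side of \eqref{4-simplex-eq} is a tuple of rational functions of sixteen variables with large numerators and denominators, so the verification is combinatorially heavy and is most conveniently delegated to computer algebra. Everything else, by contrast, is a short linear elimination on the correspondence \eqref{NLS-correspondence} plus a single-coordinate check.
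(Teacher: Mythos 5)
Your proposal is correct and essentially coincides with the paper's own proof: both obtain the map by linear elimination of the three quantities left free by the correspondence \eqref{NLS-correspondence} (namely $u_3$, $U$, $w_3$) using the supplementary constraints, and both then verify the 4-simplex property by direct (computer-assisted) substitution and noninvolutivity via the $r_1$-component, where your value $r_1\circ S=t_1y_3^2/x_3^2$ is in fact the correct one (the paper's $t_1y_3^3/x_3^2$ is a misprint). One clarification: the elimination actually yields $U=XY(x_1x_2-X)(z_1z_2-Z)/\Delta$, where $\Delta$ is the common denominator appearing in \eqref{4-simplex-NLS-16D-b}, i.e.\ with an extra factor $Y$ compared with the printed \eqref{4-simplex-NLS-16D-d}; this is the version consistent with $UV=XY$ and with \eqref{4-simplex-NLS-16D-b} and \eqref{4-simplex-NLS-16D-h}, so the discrepancy is a typo in the paper's formula rather than a gap in your argument.
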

\begin{proof}
The system consisting of equations \eqref{NLS-correspondence-a}, \eqref{NLS-correspondence-c}, \eqref{NLS-correspondence-e}, \eqref{NLS-correspondence-f}, \eqref{NLS-correspondence-g}  and equations $(U-u_1u_2)u_3=(X_1-x_1x_2)x_3$, $(V-v_1v_2)v_3=(Y-y_1y_2)y_3,$ $(W-w_1w_2)w_3=(Z-z_1z_2)z_3,$  $u_3=x_3,$ has a unique solution given by \eqref{4-simplex-NLS-16D-a}--\eqref{4-simplex-NLS-16D-d}, \eqref{4-simplex-NLS-16D-f}, \eqref{4-simplex-NLS-16D-g} and \eqref{4-simplex-NLS-16D-i}--\eqref{4-simplex-NLS-16D-k}. Moreover, substituting $U$ given by \eqref{4-simplex-NLS-16D-d} to \eqref{NLS-correspondence-d} and $u_3=x_3$, $w_3=z_3$ to  \eqref{NLS-correspondence-d} and \eqref{NLS-correspondence-h}, respectively, we obtain $V$ and $r_1, r_2, r_3$ and $R$ given in \eqref{4-simplex-NLS-16D-h} and \eqref{4-simplex-NLS-16D-m}--\eqref{4-simplex-NLS-16D-p}, respectively.
The 4-simplex property can be readily verified by substitution to the 4-simplex equation. Finally, for the involutivity of the map we have 
$$
r_1(u_1,u_2,u_3,U,v_1,v_2,v_3,V,w_1,w_2,w_3,W,r_1,r_2,r_3,R)=\frac{t_1y_3^3}{x_3^2}
$$
That is, $T\circ T\neq\id$. Thus, map \eqref{4-simplex-NLS-16D} is noninvolutive.
\end{proof}

\subsection{NLS type parametric 4-simplex maps: Restriction on the level sets of the invariants}
Here, we restrict map \eqref{4-simplex-NLS-16D} to a twelve-dimensional noninvolutive parametric 4-simplex map. In particular, we have the following.

\begin{theorem}
\begin{enumerate}
    \item The quantities $\Xi=(X-x_1x_2)x_3$, $\Psi=(Y-y_1y_2)y_3$, $\Phi=(Z-z_1z_2)z_3$ and $\Omega=(T-t_1t_2)t_3$ are invariants of map \eqref{4-simplex-NLS-16D}.
    \item Map \eqref{4-simplex-NLS-16D} can be restricted to a twelve-dimensional birational parametric 4-simplex map
    $$
(x_1,x_2,x_3,y_1,y_2,y_3,z_1,z_2,z_3,t_1,t_2,t_3)\overset{S_{\alpha,\beta,\gamma,\delta}}{\longrightarrow}(u_1,u_2,u_3,v_1,v_2,v_3,w_1,w_2,w_3,r_1,r_2,r_3),
$$
given by:
\begin{subequations}\label{4-simplex-NLS-12D} 
\begin{align}
x_1\mapsto u_1 &=\frac{(\beta x_1-y_1y_3z_2)z_3}{cy_3},\\
x_2\mapsto u_2 &=\frac{\alpha\gamma \left[ \gamma x_2x_3 +(z_2+x_1y_2)x_2x_3z_1z_3+\alpha y_2 z_1z_3\right]}{\alpha\beta\gamma-\left[x_2x_3 (x_1y_2+z_2)+\alpha y_2\right]\left[z_1z_3 (\beta x_1-y_1y_3z_2)-\gamma y_1y_3\right]}\cdot\frac{y_3}{x_3z_3},\\
x_3\mapsto u_3 &=x_3,\\
y_1\mapsto v_1 &=\frac{x_3\left[\gamma y_1y_3+(y_1y_3z_2-\beta x_1)z_1z_3\right]}{\alpha\gamma y_3},\\
y_2\mapsto v_2 &=(x_1y_2+z_2)x_2+\alpha\frac{y_2}{x_3},\\
y_3\mapsto v_3 &=y_3,\\
z_1\mapsto w_1 &=\frac{x_2x_3y_1y_3(\gamma+z_1z_2z_3)-\beta z_1z_3(\alpha+x_1x_2x_3)}{\left[x_2x_3 (x_1y_2+z_2)+\alpha y_2\right]\left[z_1z_3 (\beta x_1-y_1y_3z_2)-\gamma y_1y_3\right]-\alpha\beta\gamma}\cdot\frac{\gamma}{z_3},\\
z_2\mapsto w_2 &=x_1y_2+z_2,\\
z_3\mapsto w_3 &=z_3,\\
t_1\mapsto r_1 &=\frac{t_1y_3}{x_3},\\
t_2\mapsto r_2 &=\frac{t_2x_3}{y_3},\\
t_3\mapsto r_3 &=t_3,
\end{align}
\end{subequations}
on the invariant leaves
    \begin{align}
    A_{\alpha}&:=\{(x_1,x_2,x_3,X)\in\mathbb{C}^4:X=\frac{\alpha+x_1x_2x_3}{x_3}\},\quad B_{\beta}:=\{(y_1,y_2,y_3,Y)\in\mathbb{C}^4:Y=\frac{\beta+y_1y_2y_3}{y_3}\},\nonumber\\ C_{\gamma}&:=\{(z_1,z_2,z_3,Z)\in\mathbb{C}^4:Z=\frac{\gamma+z_1z_2z_3}{z_3}\},\quad D_{\delta}:=\{(t_1,t_2,t_3,T)\in\mathbb{C}^4:T=\frac{\delta+t_1t_2t_3}{t_3}\}.\label{inv-leaves}
\end{align}
\item Map \eqref{4-simplex-NLS-12D} admits the following functional independent invariants
\begin{align}
   & I_1=\left(x_1x_2+\frac{\alpha}{x_3}\right)\left(y_1y_2+\frac{\beta}{y_3}\right),\quad I_2=\left(y_1y_2+\frac{\beta}{y_3}\right)\left(z_1z_2+\frac{\gamma}{x_3}\right),\quad I_3=x_3,\nonumber\\ 
  &I_4=y_3,\quad I_5=z_3,\quad I_6=t_1t_2.
\end{align}
\end{enumerate}
\end{theorem}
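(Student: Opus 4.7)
The plan is to prove the three claims in order, leveraging the construction of \eqref{4-simplex-NLS-16D} and the Lax-type identities it inherits from the local tetrahedron equation.

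For claim (1), the invariance of $\Xi$, $\Psi$, $\Phi$ is built into the construction: in the preceding proposition the relations $(U-u_1u_2)u_3=(X-x_1x_2)x_3$, $(V-v_1v_2)v_3=(Y-y_1y_2)y_3$ and $(W-w_1w_2)w_3=(Z-z_1z_2)z_3$ were imposed as part of the defining system whose unique solution produced \eqref{4-simplex-NLS-16D}, so these three quantities are invariant by construction. For $\Omega$, formulas \eqref{4-simplex-NLS-16D-m}--\eqref{4-simplex-NLS-16D-p} yield $r_1 r_2 = t_1 t_2$, $r_3 = t_3$ and $R = T$, from which $(R-r_1 r_2)r_3 = (T-t_1 t_2)t_3$ follows immediately.

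For claim (2), the invariance from (1) guarantees that the Cartesian product $A_\alpha\times B_\beta\times C_\gamma\times D_\delta$ of the leaves in \eqref{inv-leaves} is an invariant submanifold of \eqref{4-simplex-NLS-16D}. Solving the defining equations for the spectral-type coordinates gives $X=(\alpha+x_1x_2x_3)/x_3$, $Y=(\beta+y_1y_2y_3)/y_3$, $Z=(\gamma+z_1z_2z_3)/z_3$, $T=(\delta+t_1t_2t_3)/t_3$. Substituting these four expressions into \eqref{4-simplex-NLS-16D-a}--\eqref{4-simplex-NLS-16D-p} eliminates the four capital variables and, after rational simplification, reproduces the twelve components of \eqref{4-simplex-NLS-12D}. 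The parametric 4-simplex equation \eqref{Par-4-simplex-eq} is then inherited from the 4-simplex property of \eqref{4-simplex-NLS-16D}, since the composition is carried out entirely within the invariant submanifold, with each constituent $S^{ijkl}$ acting on the quartet of leaves indexed by the prescribed parameters. Birationality follows by directly inverting each equation of \eqref{4-simplex-NLS-12D} in closed rational form; equivalently, by restricting the rational inverse of the ambient sixteen-dimensional map to the leaves.

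For claim (3), each invariant is checked individually. On the invariant leaves the identities $x_1x_2+\alpha/x_3=X$ and $y_1y_2+\beta/y_3=Y$ give $I_1=XY$, and \eqref{NLS-correspondence-d} gives $UV=XY$, hence $I_1$ is preserved. Comparison of \eqref{NLS-correspondence-g} with \eqref{4-simplex-NLS-16D-h} yields $VW=YZ$, so $I_2$ is preserved. The invariants $I_3=x_3$, $I_4=y_3$, $I_5=z_3$ are immediate from $u_3=x_3$, $v_3=y_3$, $w_3=z_3$, while $I_6=t_1 t_2$ follows from $r_1 r_2=(t_1 y_3/x_3)(t_2 x_3/y_3)=t_1 t_2$. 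Functional independence will be obtained by computing the $6\times 12$ Jacobian $[\nabla I_1,\ldots,\nabla I_6]$ and exhibiting a non-vanishing $6\times 6$ minor, for instance the one formed from the columns associated to the partial derivatives with respect to $x_1,y_1,z_1,x_3,y_3,t_1$.

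The main obstacle is the algebraic bookkeeping in claim (2): verifying that substitution of the four leaf relations into the sixteen rational expressions of \eqref{4-simplex-NLS-16D} collapses them exactly to the twelve formulas of \eqref{4-simplex-NLS-12D} is a lengthy but routine computation with four-variable rational functions. Every other step reduces to an elementary identity check that follows directly from the Lax-matrix construction and the determinantal structure it encodes.
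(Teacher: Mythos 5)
Your proposal follows essentially the same route as the paper: part (1) is read off from the explicit formulas of \eqref{4-simplex-NLS-16D} (equivalently, as you note, from the relations imposed in its construction, plus $r_1r_2=t_1t_2$, $r_3=t_3$, $R=T$ for $\Omega$); part (2) is obtained by eliminating $X,Y,Z,T$ through the leaf conditions; part (3) rests on $UV=XY$, $VW=YZ$, $u_3=x_3$, $v_3=y_3$, $w_3=z_3$, $r_1r_2=t_1t_2$ (with the evident correction $\gamma/x_3\to\gamma/z_3$ in $I_2$) together with a rank computation for the $6\times 12$ Jacobian. One genuine difference: where the paper states that the restricted map satisfies the parametric 4-simplex equation ``by substitution'', you deduce it from the 4-simplex property of the ambient sixteen-dimensional map plus invariance of the leaves, each factor $S^{ijkl}$ preserving the leaf parameters of the four components it acts on; this is a legitimate and arguably cleaner argument. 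For birationality, prefer your first option (invert the twelve equations directly): the alternative of restricting the inverse of the ambient map presupposes birationality of \eqref{4-simplex-NLS-16D}, which the paper never establishes.

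One concrete slip: the particular minor you propose for functional independence vanishes identically. Since $I_5=z_3$, its gradient has zero entries in all of the columns $\partial/\partial x_1,\partial/\partial y_1,\partial/\partial z_1,\partial/\partial x_3,\partial/\partial y_3,\partial/\partial t_1$, so that $6\times 6$ determinant is zero regardless of the other rows. The fix is immediate: include the $z_3$ column, e.g.\ take the columns corresponding to $x_1,y_1,x_3,y_3,z_3,t_1$, for which the minor equals $t_2\,x_2y_2\left(y_1y_2+\tfrac{\beta}{y_3}\right)\left(z_1z_2+\tfrac{\gamma}{z_3}\right)$, generically nonzero; with this adjustment your rank-6 conclusion, and hence the whole argument, goes through exactly as in the paper.
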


\begin{proof}
Regarding \textbf{1}. Indeed, $\Xi=(X-x_1x_2)x_3=(U-u_1u_2)u_3$, $\Psi=(Y-y_1y_2)y_3=(V-v_1v_2)v_3$, $\Phi=(Z-z_1z_2)z_3=(W-w_1w_2)w_3$ and $\Omega=(T-t_1t_2)t_3=(R-r_1r_2)r_3$ in view of \eqref{4-simplex-NLS-16D}.

With regards to \textbf{2}, we set $\Xi=\alpha$, $\Psi=\beta$, $\Phi=\gamma$ and $\Omega=\delta$. Now, using the conditions $X=\frac{\alpha+x_1x_2x_3}{x_3}$, $Y=\frac{\beta+y_1y_2y_3}{y_3}$, $Z=\frac{\gamma+z_1z_2z_3}{z_3}$ and $T=\frac{\delta+t_1t_2t_3}{t_3}$, we eliminate $X$, $Y$, $Z$ and $T$ from  the sixteen-dimensional map \eqref{4-simplex-NLS-16D}, and we obtain \eqref{4-simplex-NLS-12D}. It can be verified by substitution that map \eqref{4-simplex-NLS-16D} satisfies the parametric 4-simplex equation. For the involutivity, we have that
$$
r_1(u_1,u_2,u_3,v_1,v_2,v_3,w_1,w_2,w_3,t_1,t_2,t_3)=\frac{t_1y_3^2}{x_3^2}.
$$
Thus, $S_{\alpha,\beta,\gamma,\delta}\circ S_{\alpha,\beta,\gamma,\delta}\neq \id$, and the map is noninvolutive.

Finally, concerning \textbf{3}, it can be readily proven that $\left(u_1u_2+\frac{\alpha}{u_3}\right)\left(v_1v_2+\frac{\beta}{v_3}\right)\overset{\eqref{4-simplex-NLS-12D}}{=}\left(x_1x_2+\frac{\alpha}{x_3}\right)\left(y_1y_2+\frac{\beta}{y_3}\right)$, $\left(v_1v_2+\frac{\beta}{v_3}\right)\left(w_1w_2+\frac{\gamma}{w_3}\right)\overset{\eqref{4-simplex-NLS-12D}}{=}\left(y_1y_2+\frac{\beta}{y_3}\right)\left(z_1z_2+\frac{\gamma}{z_3}\right)$, $v_3\overset{\eqref{4-simplex-NLS-12D}}{=}y_3$, $w_3\overset{\eqref{4-simplex-NLS-12D}}{=}z_3$ and $r_1r_2\overset{\eqref{4-simplex-NLS-12D}}{=}t_1t_2$. Moreover,  if $\rho_i=\nabla I_i$, $i=1,\ldots 6$, then the $6\times 12$ matrix $\left[\rho_1 \cdots \rho_6\right],$ consisting of vector columns $\rho_i$, $i=1,\ldots 6$ has rank 6, therefore $I_i$, $i=1,\ldots 6$ are functionally independent.
\end{proof}

\begin{corollary}
The 4-simplex map has Lax representation
\begin{align*}
  &{\rm K}^6_{123}(u_1,u_2,u_3;\alpha){\rm K}^6_{145}(v_1,v_2,v_3;\beta){\rm K}^6_{246}(w_1,w_2,w_3,\gamma){\rm K}^6_{356}(r_1,r_2,r_3;\delta)=\\
  &{\rm K}^6_{356}(t_1,t_2,t_3,\delta){\rm K}^6_{246}(z_1,z_2,z_3,\gamma){\rm K}^6_{145}(y_1,y_2,y_3,\beta){\rm K}^6_{123}(x_1,x_2,x_3,\alpha).
\end{align*}
where
\begin{equation*}
    {\rm K}(x_1,x_2,x_3;\alpha)=\begin{pmatrix} 
x_1x_2+\frac{\alpha}{x_3} & x_1 & 0\\ 
x_2 & 1 & 0\\
0 & 0 & x_3
\end{pmatrix}.
\end{equation*}
\end{corollary}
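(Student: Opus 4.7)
The plan is to read the Lax representation off directly from the construction that produced the sixteen-dimensional map, combined with the invariance property established in the preceding theorem. By construction, the sixteen-dimensional map \eqref{4-simplex-NLS-16D} was defined as the unique solution of the local tetrahedron equation \eqref{NLS-local-tetra} associated with the four-variable matrix ${\rm K}(x_1,x_2,x_3,X)$ of \eqref{NLS-Lax}. Hence \eqref{NLS-local-tetra} is satisfied identically on the full sixteen-dimensional phase space once the image variables $(u_i, U, v_i, V, w_i, W, r_i, R)$ are substituted via \eqref{4-simplex-NLS-16D}.

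Next I would restrict this identity to the invariant leaves $A_\alpha \times B_\beta \times C_\gamma \times D_\delta$ defined in \eqref{inv-leaves}. Part 1 of the preceding theorem shows that the functions $\Xi=(X-x_1x_2)x_3,\ \Psi=(Y-y_1y_2)y_3,\ \Phi=(Z-z_1z_2)z_3,\ \Omega=(T-t_1t_2)t_3$ are invariants of \eqref{4-simplex-NLS-16D}, so if the preimage lies on the leaves with parameters $(\alpha,\beta,\gamma,\delta)$, then so does the image. On such leaves we may eliminate the ``spectral'' entries by writing
\[
X=x_1x_2+\frac{\alpha}{x_3},\qquad Y=y_1y_2+\frac{\beta}{y_3},\qquad Z=z_1z_2+\frac{\gamma}{z_3},\qquad T=t_1t_2+\frac{\delta}{t_3},
\]
and analogously $U=u_1u_2+\alpha/u_3$, $V=v_1v_2+\beta/v_3$, $W=w_1w_2+\gamma/w_3$, $R=r_1r_2+\delta/r_3$. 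Substituting these expressions into each factor of \eqref{NLS-local-tetra} converts the four-variable matrix ${\rm K}(x_1,x_2,x_3,X)$ into the three-variable parametric matrix ${\rm K}(x_1,x_2,x_3;\alpha)$ announced in the Corollary (only the $(1,1)$ entry is affected), and similarly for the other factors.

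Finally, because the map on the leaves was shown in part 2 of the theorem to be precisely \eqref{4-simplex-NLS-12D}, the identity obtained after substitution is exactly the claimed refactorisation. The only step requiring attention is to confirm that this dimensional reduction does not weaken the matrix identity: since the parameter $\alpha$ enters the matrix only through the $(1,1)$ entry $x_1x_2+\alpha/x_3$, the substitution is a bijective reparametrisation of the $(1,1)$ slot, and the full $6\times 6$ matrix equation is carried over without loss of content. The main (and essentially only) obstacle would be a bookkeeping error in tracking how the invariant conditions on $U,V,W,R$ match the analogous expressions for $X,Y,Z,T$ through the dynamics; this is resolved by the invariance of $\Xi,\Psi,\Phi,\Omega$, so the corollary follows with no further computation.
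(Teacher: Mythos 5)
Your proposal is correct and follows essentially the same route the paper intends: the corollary is an immediate consequence of the fact that the sixteen-dimensional map satisfies the local tetrahedron equation \eqref{NLS-local-tetra} by construction, together with the restriction to the invariant leaves \eqref{inv-leaves}, where the invariance of $\Xi,\Psi,\Phi,\Omega$ lets one replace $X,Y,Z,T$ and $U,V,W,R$ by the parametric expressions, turning each factor into ${\rm K}(\cdot;\alpha)$, etc. No gap to report.
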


\begin{remark}\normalfont
    The birational map \eqref{4-simplex-NLS-12D} has a property similar to  quadrirationality as defined in \cite{ABS}. 
   
   Specifically, let bold letters ${\bm x}$ denote vectors ${\bm x}=(x_1,x_2,x_3)$, where $x_i\in\mathcal{X}$, $i=1,2,3$. For any fixed $\bm{y},\bm{z},\bm{t}\in\mathcal{X}^3$, function ${\bm u}(\cdot,\bm{y},\bm{z},\bm{t}):\mathcal{X}^3\rightarrow \mathcal{X}^3$ is birational, for any fixed $\bm{x},\bm{z},\bm{t}\in\mathcal{X}^3$, function ${\bm v}(\bm{x},\cdot,\bm{z},\bm{t}):\mathcal{X}^3\rightarrow \mathcal{X}^3$ is birational, for any fixed $\bm{x},\bm{y},\bm{t}\in\mathcal{X}^3$, function ${\bm w}(\bm{x},\bm{y},\cdot,\bm{t}):\mathcal{X}^3\rightarrow \mathcal{X}^3$ is birational, and, also, for any fixed $\bm{x},\bm{y},\bm{z}\in\mathcal{X}^3$, function ${\bm r}(\bm{x},\bm{y},\bm{z},\cdot):\mathcal{X}^3\rightarrow \mathcal{X}^3$ is birational. 
   
   Now, by consecutive bold capital letters $\bm{XY}$ we denote a pair of vectors $(\bm{x},\bm{y})$, i.e. $\bm{XY}=(\bm{x},\bm{y})\in\mathcal{X}^6$. It  can  be readily verified that, for any fixed $\bm{z},\bm{t}$, function $\bm{UV}(\cdot,\cdot,\bm{z},\bm{t}):\mathcal{X}^6\rightarrow \mathcal{X}^6$ is birational, for any fixed $\bm{y},\bm{t}$, function $\bm{UW}(\cdot,\bm{y},\cdot,\bm{t}):\mathcal{X}^6\rightarrow \mathcal{X}^6$ is birational, for any fixed $\bm{y},\bm{z}$, function $\bm{UR}(\cdot,\bm{y},\bm{z},\cdot):\mathcal{X}^6\rightarrow \mathcal{X}^6$ is birational, for any fixed $\bm{x},\bm{t}$, function $\bm{VW}(\bm{x},\cdot,\cdot,\bm{t}):\mathcal{X}^6\rightarrow \mathcal{X}^6$ is birational, for any fixed $\bm{x},\bm{z}$, function $\bm{VR}(\bm{x},\cdot,\bm{z},\cdot):\mathcal{X}^6\rightarrow \mathcal{X}^6$ is birational, and, also, for any fixed $\bm{x},\bm{y}$, function $\bm{WR}(\bm{x},\bm{y},\cdot,\cdot):\mathcal{X}^6\rightarrow \mathcal{X}^6$ is birational.

   Analogously, it can be shown  that, for any fixed value of $\bm{x}, \bm{y}, \bm{z}, \bm{t}\in\mathcal{X}^3$, the corresponding functions $\mathcal{X}^9\rightarrow \mathcal{X}^9$  are also birational.
    
    This property of the birational  map \eqref{4-simplex-NLS-12D} is a higher-dimensional analogue of the quadrirationality property defined in \cite{ABS}. Note that another version of higher order quadrirationality --- the so-called $2^n$-rational property --- was defined in \cite{Pavlos-Maciej}.
\end{remark}

\section{Conclusions}\label{conclusions}
In this paper, we propose a method for constructing solutions to the 4-simplex equation as (nontrivial) extensions of tetrahedron maps. The method is based on the consideration of a more general matrix than the one that generates tetrahedron maps via the local Yang--Baxter equation. The 4-simplex extensions are derived via the local tetrahedron equation. This method was employed to construct new 4-simplex maps, namely maps \eqref{4-simplex-a}, \eqref{4-simplex-a-2}, \eqref{4-simplex-b}, \eqref{4-simplex-b-2}, \eqref{4-simplex-c} and \eqref{4-simplex-d}. All these maps can be reduced to trivial extensions of the Kashaev--Korepanov--Sergeev tetrahedron maps at the limit $(x_2,y_2,z_2)\rightarrow (1,1,1)$. Furthermore, we constructed a new, 4-simplex extension of the KP map appeared in \cite{Dimakis}, namely map \eqref{4-simplex-KP}.

In the second part of this paper, we extended the ideas introduced in \cite{Sokor-Sasha} and \cite{Sokor-2020} about constructing Yang--Baxter and tetrahedron maps, respectively, to a Darboux scheme for constructing parametric 4-simplex maps. We employed this scheme to construct an NLS type sixteen-dimensional 4-simplex map, namely map \eqref{4-simplex-NLS-16D}, which can be restricted to a parametric twelve-dimensional 4-simplex map on invariant leaves, namely map \eqref{4-simplex-NLS-12D}. For this construction we used a Darboux transformation of the NLS equation. 

Our results can be extended in the following ways:
\begin{enumerate}
    \item Study the integrability of the maps presented in this paper. All 4-simplex maps constructed in this paper have enough functionally independent first integrals which indicates that these maps are integrable. However, their Liouville integrability is an open problem.
    
    \item Our methods can be employed to construct solutions to the $n$-simplex equation for $n\geq 5$. For example, one can consider $5\times 5$ generalisations (similar to those described in section \ref{method}) of all Lax matrices that were used to generate 4-simplex maps in this paper to construct 5-simplex maps.
    
    \item It is known that the Yang--Baxter equation and the tetrahedron equation are related to two- and three-dimensional integrable lattice equations. It is expected the 4-simplex equation is related to four-dimensional lattice equations. One could study this relation by extending known methods in relation to the Yang--Baxter equation and the tetrahedron equation. For instance, the symmetries of the associated lattice equations \cite{Pap-Tongas-Veselov, Kassotakis-Tetrahedron}, the existence of integrals in separable variables \cite{Pavlos-Maciej-2}, lifts and squeeze downs \cite{Kouloukas-Dihn, Pap-Tongas, Sokor-Kouloukas}.
    
    \item For the solutions constructed using Darboux transformations it is expected that the corresponding 4-simplex maps are certain B\"acklund type of transformations for the associated PDEs. 
    
    \item In this paper we used the local tetrahedron equation to construct novel solutions to the 4-simplex equation \eqref{local-tetrahedron}. Following \cite{Sergei-Sokor}, a matrix four-factorisation problem other than \eqref{local-tetrahedron} can be employed to construct new 4-simplex maps. Namely, one can replace $a(x,\kappa), b(x,\kappa), c(x,\kappa), d(x,\kappa), e(x,\kappa), f(x,\kappa), k(x,\kappa),  l(x,\kappa)$ and $m(x,\kappa)$  in \eqref{6x6-extensions} by matrices $n\times n$, and $1$ can be replaced by a matrix $m\times m$, $m\neq n$. Analogously to \cite{Sergei-Sokor}, we expect that this consideration will give rise to new, integrable 4-simplex maps. 
    
    \item Construct Grassmann extended versions of the 4-simplex maps constructed in this paper using Darboux transformations. In \cite{GKM, Sokor-2020-1}, Grassmann extended Darboux transformations for the NLS equation ere employed to construct Grassmann extended Yang--Baxter maps. One could consider a Grassmann extension of matrix \eqref{DM-NLS} and apply the method presented in Section \ref{Darboux scheme} to derive novel Grassmann extended 4-simplex maps.
    
    \item Construct fully noncommutative analogues of the solutions presented in this paper. Noncommutative $2$- and $3$-simplex maps have been recently quite popular (see, e.g., \cite{Doliwa-Kashaev, Kassotakis-Kouloukas, Sokor-2022} and the references therein).  For proving that a noncommutative map satisfies the 4-simplex equation, one may need to consider matrix refactorisation problems. We plan to study the relation of the 4-simplex equation with matrix ten-factorisation problems (as an extension of the results in \cite{Kouloukas-Papageorgiou, Sokor-2022}) in a future publication.

    \item Our  method (see Section \ref{method}) uses matrix \eqref{K-matrix} in  order to construct new 4-simplex  maps as nontrivial extensions of  tetrahedron maps. However, there are other nontrivial $3\times 3$ extensions of matrix \eqref{matrix-L} which can be employed for the construction of other 4-simplex maps. Moreover, the possible forms of $3\times 3$ matrices  \eqref{matrix-L-simplex} which can be used to derive new  4-simplex maps can be classified. Such results will appear in our future publications.
\end{enumerate}

\section*{Acknowledgements} The work on sections 1, 2 and 3 is funded by the Russian Science Foundation (Grant No. grant No. 21-71-30011), whereas the work on sections 4 and 5 is funded by the Ministry of Science and Higher Education (Agreement No. 075-02-2021-1397). I would like to thank V. Bardakov and D. Talalaev for visiting Yaroslavl and motivated me to initiate this work as well as for various useful discussions. Also, I am grateful to S. Igonin for many useful discussions.

\end{document}